\newcommand{\reform}{REFORM}
\newtheorem{approach}{Approach}
\newtheorem*{challenge}{Challenge}
\newtheorem{proposition}{Proposition}
\newtheorem{lemma}{Lemma}
\newtheorem{corollary}{Corollary}
\newtheorem{definition}{Definition}
\newtheorem{theorem}{Theorem}
\newtheorem*{note}{Note}
\newtheorem*{remark}{Remark}
\title{REFORM: Reputation Based Fair and Temporal Reward Framework for Crowdsourcing}
\author{
  Samhita Kanaparthy  \\
  Machine Learning Lab, IIIT Hyderabad\\
  Hyderabad, India \\
  \texttt{s.v.samhita@research.iiit.ac.in} \\
  \And
  Sankarshan Damle\\
  Machine Learning Lab, IIIT Hyderabad\\
  Hyderabad, India \\
  \texttt{sankarshan.damle@research.iiit.ac.in} \\
  \And
  Sujit Gujar\\
  Machine Learning Lab, IIIT Hyderabad\\
  Hyderabad, India \\
  \texttt{sujit.gujar@iiit.ac.in} \\
}
\begin{document}
\maketitle

\begin{abstract}
Crowdsourcing is an effective method to collect data by employing distributed human population. Researchers introduce appropriate reward mechanisms to incentivize agents to report accurately. In particular, this paper focuses on Peer-Based Mechanisms (PBMs). We observe that with PBMs, crowdsourcing systems may not be fair, i.e., agents may not receive the deserved rewards despite investing efforts and reporting truthfully. Unfair rewards for the agents may discourage participation. This paper aims to build a general framework that assures fairness for PBMs in temporal settings, i.e., settings that prefer early reports. Towards this, we introduce two general notions of fairness for PBMs, namely \emph{$\gamma$-fairness} and \emph{qualitative fairness}. To satisfy these notions, our framework provides trustworthy agents with additional chances of pairing. We introduce Temporal Reputation Model (TERM) to quantify agents' trustworthiness across tasks. With TERM as the key constituent, we present our iterative framework, REFORM, that can adopt the reward scheme of any existing PBM. We demonstrate REFORM's significance by deploying the framework with RPTSC's reward scheme. Specifically, we prove that REFORM with RPTSC considerably improves fairness; while incentivizing truthful and early reports.  We conduct synthetic simulations and show that our framework provides improved fairness over RPTSC.
\end{abstract}

\keywords{Crowdsourcing, Data-Elicitation, Fairness, Reputation scores, Nash Equilibrium}

\section{Introduction}
\emph{Crowdsourcing} is a popular method for \emph{requesters} to collect information from many people who input their data through the internet, sensors, and other data streams. Crowdsourcing systems enable the crowd with diverse expertise to contribute to any outsourced \emph{tasks}, including rating products online, urban sensing, collecting real-world data~\cite{review, MCS, Gom1}. In these systems, to maintain accuracy, researchers devote a large body of work in \emph{incentivizing} agents for truthful data elicitation \cite{ppht, pphu, de1, ppts}. Typically, these tasks do not have access to the \emph{ground truth}. Hence, we cannot verify the correctness of the agents' reports. Towards this, 
researchers introduce \emph{Peer Based Mechanisms} (PBMs). PBMs reward an agent based on its consistency with other often random agents referred to as ``peers". For instance, \emph{Robust Peer Truth Serum for Crowdsourcing} (RPTSC)~\cite{PTSC} evaluates an agent's report against a randomly paired peer's report and rewards it if they match. With this, RPTSC incentivizes agents to exert efforts and report truthfully.

In this work, we observe that PBMs are inherently \emph{unfair} as the agent's reward depends on its consistency with randomly selected peers' reports and not primarily on its efforts. While existing PBMs incentivize efforts and truthful reporting, agents still cannot be perfectly reliable as they may have noisy observations or be malicious. In such a case, a reliable or \emph{trustworthy} agent may not get the reward it deserves from unfair pairings. Thus, we believe \emph{fair} rewards are necessary to ensure the participation of trustworthy agents in crowdsourcing.

We consider a crowdsourcing setting in which the task's requester desires real-time data, i.e., it requires the agents to submit their reports at the earliest. We call such a setting \emph{temporal setting}. Such tasks may comprise real-time data collection (e.g., passenger train timetable~\cite{train}, emergent safety incidents information~\cite{campus}, real-time COVID-19 data~\cite{zhang2020crowdsourcing}). In all such tasks, the data reported \emph{early} is \emph{valuable}. It is natural to assume that the reward should decrease with time for any mechanism to incentivize early reporting. However, such \textit{decay} in reward may encourage the agents, who need longer time, to report randomly than exerting efforts, further aggravating the fairness challenges. In summary, we address the following challenge.

\begin{challenge}
To devise a truthful peer-based reward scheme, which ensures fairness while simultaneously incorporating temporal setting.
\end{challenge}

\noindent\textbf{Our Contributions.} To quantify fairness among different PBMs, we introduce two new notions, $\gamma$-fairness (Definition \ref{defn:gamma}) and \emph{Qualitative Fairness} (Definition~\ref{defn:qualitative}). The first notion captures the proximity of the expected reward a PBM guarantees with its optimal reward. The greater the value of $\gamma$, the fairer is the PBM. We believe that this is the first general notion quantifying fairness in PBMs. The second notion ensures that the agent's reward is proportional to its reputation, i.e., trust that the system places on its report.

To achieve these fairness properties in PBMs, we propose the idea of allowing ``trustworthy" agents with additional pairing chance(s). Intuitively, such additional chances will nullify the penalty a trustworthy agent would have incurred from the unfair pairings. To do so, we need quantification of agents' trustworthiness. In general, crowdsourcing systems deploy reputation models for such quantification~\cite{Gom2,MCS}. However, the existing models do not consider temporal setting. Thus, we desire a reputation model that is resistant to manipulation by agents in temporal setting. Towards this, we propose a novel \emph{Temporal Reputation Model} (TERM), which deploys the \emph{Gompertz} function~\cite{Gom2} to output reputation scores. TERM assigns scores to the agents based on their reports and the time taken to submit. We prove that TERM produces high scores for early and truthful reports (Lemma \ref{claim:claim1}).

Having TERM, we design an iterative framework \reform: \emph{REputation based Fair and tempOral Reward fraMework} for crowdsourcing (Framework 1) which takes in the reward scheme of any existing PBM. To exhibit REFORM's effectiveness, we plug in RPTSC's reward scheme and prove that \reform\ with RPTSC reward is strict Nash incentive compatible (Theorem \ref{thm:IC}), i.e., exerting efforts and reporting truthfully and early is a strict Nash equilibrium.

With the definitions of $\gamma$-fairness and qualitative fairness, we show that REFORM framework with RPTSC reward is significantly fairer than RPTSC (Propositions \ref{ptsc-fair}, \ref{reform-fair}) and also qualitatively fair (Theorem~\ref{thm:qf}). We then perform synthetic simulations to validate our results further and show the ameliorated fairness.

\section{Related Work}
Crowdsourcing systems use reward mechanisms to improve the accuracy of the reports~\cite{pphu,de1,ppts}. Towards this, researchers have proposed PBMs which appropriately incentivize agents by evaluating their reports against other agents.

The earliest PBM, \emph{Bayesian Truth Serum} (BTS)~\cite{BTS}, elicits the reports by asking agents to submit their answers and their \emph{beliefs} about other agents' answers. Huang et al.~\cite{PC} propose \emph{Peer consistency} (PC), which rewards the agent based on its consistency with the peer reports. Here, for all the agents, an optimal strategy is to collude and submit the same report. We refer to such a collusion strategy as \emph{single report strategy}. \emph{Peer Truth Serum} (PTS)~\cite{PTS} is a combined version of PC and BTS but requires prior distribution of answers. However, in many settings, prior about the answers is \emph{not} accessible. Randanovic et al.~\cite{PTSC} present an incentive mechanism, \emph{Robust Peer Truth Serum for Crowdsourcing} (RPTSC), which uses the distribution of reported answers from similar tasks as prior. RPTSC can operate with a small number of statistically independent tasks and is resistant to single report strategy. In multi-task settings, \emph{Correlated Agreement} (CA)~\cite{CA} achieve informed truthfulness with an infinite number of tasks, while \emph{Determinant based Mutual Information} (DMI)~\cite{DMI} is dominantly truthful but requires a batch of agents to solve the same set of tasks in a single round. Though we work on Nash Incentive Compatibility, we relax these strong limitations allowing agents to solve any finite number of tasks.

Despite their advantages, we believe that PBMs are inherently unfair, necessitating fair rewards to ensure the participation of trustworthy agents in crowdsourcing. Researchers are actively looking to achieve fairness in crowdsourcing through mechanism design~\cite{fair1,fair3,DBT,Farm}. For instance, Goel et al. \cite{DBT} present Deep Bayesian Trust, which assures fair rewards to the agents by assuming a few gold-standard tasks. Moti et al. \cite{Farm} propose a fair reward mechanism for localized settings. However, we can not apply the proposed notions to address unfairness in PBMs. To the best of our knowledge, we believe that we are the first to address fairness in PBMs.


\section{Preliminaries}
This section describes our crowdsourcing model, relevant game-theoretic definitions and introduces our novel fairness notions. We tabulate notations used in this paper in the supplementary material for reference.

\subsection{Crowdsourcing Model}
The \emph{requester} of the crowdsourcing system publishes a set of $n$ tasks $\mathcal{T} = \{\tau_{1},\tau_{2},\ldots,\tau_{n}\}$ along with their deadlines $\{\delta_{\tau_{1}},\delta_{\tau_{2}},\ldots,\delta_{\tau_{n}}\}$ on the platform every round. These tasks have a discrete and finite answer space $\mathcal{X}$. Tasks in this setting are statistically independent and a-priori similar \cite{DMI}. The requester assigns tasks to available agents $\mathcal{A} = \{a_{1},a_{2},\ldots,a_{m}\}$, and each task in set $\mathcal{T}$ is assigned to at least two agents. We consider that each agent performs a single task in a round for ease of discussion. Whenever an agent is assigned more than one task, we apply our framework to each task solved by the agent separately. Agent $a_i$ is rewarded $R_i(y_i,t_i)$ based on its report $y_i \in \mathcal{X}$ and the time of report, $t_i$.

\subsubsection{Agents ($\mathcal{A}$)}
We assume that the agents in the system are rational and intelligent. In our setting, the agents solve the assigned task either by exerting high ($e_{H}$) or low ($e_{L}$) efforts. Naturally, the cost of putting high efforts is greater, i.e., $c(e_{H}) > c(e_{L})$. We consider any effort which is insufficient to solve the task as low. If agent $a_i$ does its reasonable best to solve the task, i.e., exerts high effort, it obtains an evaluation $x_i$. Based on their efforts and reporting behavior, an agent $a_i$ has a choice between the following two strategies.

\begin{enumerate}[noitemsep,leftmargin=*]
    \item \emph{Trustworthy Strategy.} Exert high effort ($e_{i} = e_{H}$) and report true evaluation ($y_{i} = x_{i}$) at time $t_i^*$, which is the time taken to solve the task. We refer to an agent choosing this strategy as Trustworthy agent (TA).
    \item \emph{Random Strategy.} Exert low effort ($e_{i} = e_{L}$) and report answer randomly according to its prior distribution at any time. We refer to this category of agents as Random agents (RAs).
\end{enumerate}

\subsection{Rewards (R)}
In PBMs, the agents' reports are evaluated against their peers' reports. Depending on its reward scheme, PBMs reward an agent $a_{i}$ having report $y_i$ with reward, denoted by the function $\mathit{peer\mbox{-}fac}(y_i)$. Agent $a_{i}$ is rewarded when its reports match with that of peer $a_p$'s, i.e., $y_{i}=y_{p}$ and is penalized otherwise. For instance, in PTS the reward is $\mathit{peer\mbox{-}fac}(y_i) = \frac{\mathbb{I}_{y_i=y_p}}{f(y_i)}$. Here, $\mathbb{I}_{y_{i}=y_{p}}$ is an indicator variable\footnote{Equal to 1 if $y_i=y_p$ and 0 otherwise}, and $f(y_i)$ is the frequency function which counts the number of occurrences of report $y_i$. As we consider temporal setting, we want the agents to submit their reports at the earliest. One may note that, for any mechanism to incentivize early reporting, the reward should reduce with an increase in time taken for submitting the report. For instance, crowdfunding mechanisms are shown to incentivize early contributions through ``refunds" that decay with time \cite{damle2018designing, damle2019ijcai, chandra2016crowdfunding}. To incorporate this, we employ a decay factor $\beta(t)$ that decays with time in the reward. In our setting, the reward comprises of (i) peer factor $\mathit{peer\mbox{-}fac}(\cdot)$, and (ii) decay factor $\beta(\cdot)$. In summary, the reward agent $a_{i}$ gets on reporting $y_{i}$ for a task $\tau$ after time $t_{i} \leq \delta_{\tau}$\footnote{Reports submitted after the task deadline are not considered.} is,

\begin{equation}\label{eqn:reward}
    R_i(y_{i},t_{i}) = \mathit{peer\mbox{-}fac}(y_{i}) \times \beta(t_{i})     
\end{equation}

Note that, this incentive structure results in a game among the interested agents. To study the game induced, we now define game-theoretic definitions used in the paper. For this, we define the utility of the agent $a_{i}$ as, $u_i = R_i(y_{i},t_{i}) - c(e_{i})$. Further, let $s = (s_{1},s_{2},\ldots,s_{n})$ be the strategy profile such that agent $a_i$'s strategy $s_{i} = (y_{i}, e_{i},t_{i})$ is a tuple consisting of its report, the effort exerted, and the time taken to report. We use subscript $-i$ for the strategy profile without an agent $a_{i}$. Agent $a_{i}$'s utility for solving the task when all the agents play the strategy profile $s = (s_{i},s_{-i})$ is $u_{i}(s_{i},s_{-i})$. Additionally, let $t_i^*$ be the time taken by the agent $a_{i}$ to solve the task. We denote strategy profile $s^{TS} = (s^{TS}_{1},s^{TS}_{2},\ldots,s^{TS}_{n})$ where $s^{TS}_i = (x_i,e_H,t_i^*)\:\forall i$ when all the agents choose trustworthy strategy. And, let $S = \{(y_{i},e_{i},t_{i}) \: | \: \forall y_{i} \in \mathcal{X}, e_{i} \in \{e_{H},e_{L}\}, \forall t_{i} \}$ be the strategy space. With this, we give the following definition.

\begin{definition}[Nash Incentive Compatible (NIC)] A mechanism is said to be NIC if every agent $a_i$ employing trustworthy strategy maximizes its utility given all the other agents choose trustworthy strategy.
\begin{equation}\label{eqn::nic}
    u_{i}(s^{TS}_{i},s^{TS}_{-i}) \geq u_{i}(s_{i}, s^{TS}_{-i}), \forall s_{i} \in S, \forall a_i \in \mathcal{A}     
\end{equation}
\end{definition}

\begin{note}\nonumber
We say that a NIC is \emph{strict} if the inequality (\ref{eqn::nic}) is strict.
\end{note}

\subsection{Quantifying Fairness in PBMs}
As stated, PBMs in the existing literature evaluate agents against a randomly selected peer and reward them based on their reports. One may observe that such evaluations may result in unfair rewards because of a trustworthy agent getting paired with a randomly selected agent. Even when a PBM incentivizes agents to exert efforts and report truthfully, if still, the agent's peer receives a noisy observation or is malicious, it can lead to unfair rewards for the agent who observed correct signals. As a result, a trustworthy agent may incur an unfair penalty. The lack of fairness in PBMs, i.e., trustworthy agents incurring penalties, was first observed in \cite{peer-unfair}. To quantify fairness, we now present two novel notions of fairness relevant to PBMs.

\subsubsection{$\gamma$-Fairness}
This notion of fairness depends on the difference in optimal and expected rewards of trustworthy agents in a PBM. We believe that this is the first general notion of quantifying fairness in PBMs.
    
Let for any PBM, $M^{*}$ be the optimal reward a trustworthy agent gets when its report matches with a peer's report, and $E^{*}$ be the expected reward. With this, we define $\gamma$-fairness as follows,
    \begin{definition}[$\gamma$-Fairness]\label{defn:gamma}
        For a trustworthy agent, we say a PBM is $\gamma$-Fair if the expected difference in its optimal and the expected rewards is equal to $\frac{1}{\gamma}$, that is, 
        $$\mathbb{E}_{x \in \mathcal{X}}\left[M^{*}-E^{*}\right] = \frac{1}{\gamma}$$
    \end{definition}
    
\subsubsection{Qualitative Fairness}
In mechanisms that deploy reputation scores, it is desirable to prioritize the agents with better reputation over an agent with a lesser reputation. A report from the agent who promptly submits the truth is always valuable compared to an agent's report with an arbitrary history of reporting.
    
Similarly, in PBMs with reputation scores, an agent with a higher reputation should have a higher expected reward than agents with the same report but a lesser reputation. We capture this desired property with a new notion of fairness, namely \emph{Qualitative Fairness}.
    
    \begin{definition}[Qualitative Fairness]\label{defn:qualitative}
        Let agents $a_{i}$, $a_{j}\in \mathcal{A}$ submit their reports $y_{i}$, $y_{j}$ at the same time $t$ such that $y_{i}=y_{j}$. We say a PBM guarantees qualitative fairness if its rewards satisfy,
        \begin{equation}\nonumber
            \mathbb{E}[R_{i}(y_{i} = y,t)|\Omega_{i}] \geq \mathbb{E}[R_{j}(y_{j} = y,t)|\Omega_{j}] \\
                \quad \forall\Omega_{i} \geq \Omega_{j}, \forall y \in \mathcal{X}, \forall i,j.        
        \end{equation}
    \end{definition}
Here, $\mathbb{E}[R_{i}(y_{i} = y,t)|\Omega_{i}]$ is expected reward of agent $a_{i}$ with reputation score $\Omega_i$ for reporting $y_i$ at time $t$.
    
\subsection{Our Approach: Ensuring Fairness in PBMs}\label{ssec:approach}
To improve fairness in PBMs and satisfy the given notions, we look at two different approaches and analyze them.

    \begin{approach}\label{app:1}
        A straightforward way to overcome unfairness in PBMs is by using average (or weighted average) over multiple reports, say $w$ reports, to reward a particular report. 
    \end{approach}
    
This reduces the penalty obtained from unfair pairings for a trustworthy agent to some extent. However, it also assures higher expected rewards for random agents, increasing the overall budget, which is not desirable. Thus, we aim for a reward scheme that guarantees better fairness for trustworthy agents while discouraging random reporting. Towards this, we present our approach.
        
    \begin{approach}\label{app:2}
        We provide agents with additional chances of pairing, say $k$, to evaluate their reports when paired with a less reputed agent.
    \end{approach} 
    
Consider a trustworthy agent who reported $y$ for some task. Let the optimal reward it obtains when its report matches its peer's report (ignoring the decay factor $\beta(\cdot)$) be $g = \mathit{peer\mbox{-}fac}(y|y'=y)$ and penalty obtained when reports do not match be $l = \mathit{peer\mbox{-}fac}(y | y'\neq y)$.

\noindent\textbf{Expected reward with Approach \ref{app:1}}: Let us consider that the probability of an agent's report matching with its random peer's report be $\frac{1}{2}$. Now, averaging reward across $w$ such reports, the expected reward of an agent is
    \begin{equation}\nonumber
        R\mbox{-}1  = \frac{1}{w}\left(\frac{1}{2}\left(w\cdot g\right) + \frac{1}{2}\left(w \cdot l\right)\right) =  \frac{1}{2} g + \frac{1}{2} l
    \end{equation}

\noindent\textbf{Expected reward with Approach \ref{app:2}}: For our approach, consider the least number of additional chances of pairing, i.e., $k=2$. The agent is given another chance when its report does not match, and its reputation score is higher than its peer's in the first matching. Here, we assume that an agent's reputation is more than its peer with probability $0\leq r \leq 1$. Hence, the expected reward is
    \begin{equation}\nonumber
        R\mbox{-}2  = \frac{1}{2}g + \frac{1}{2}\left((1-r)l + t \left(\frac{1}{2}g + \frac{1}{2}l\right)\right) =  \left(\frac{1}{2} + \frac{r}{4}\right) g + \left(\frac{1}{2} - \frac{r}{4}\right) l
    \end{equation}

As we see, the expected reward $R\mbox{-}2$ is greater and is closer to the optimal reward $g$ compared to $R\mbox{-}1$. Hence, we adopt Approach \ref{app:2}, which guarantees better fairness compared to Approach \ref{app:1}.

\begin{remark}
    The ingenuity of our approach is to give trustworthy agents additional chances of pairing to evaluate their reports, which reduces the possibility of agents getting penalized for unfair pairings. This decrease in unfair penalty leads to higher expected rewards, further improving fairness. However, we must ensure that this process does not increase the expected rewards for agents with random strategies due to additional matching. To decide which agent will receive additional chances to pair, we use \emph{reputation scores} as a metric.
\end{remark}

Research has shown that reputation scores successfully quantify the trust a crowdsourcing system must place on an individual agent based on its history. However, no reputation model exists in the literature, which factors in the time taken to submit the report. Consequently, we introduce our reputation model, \emph{TERM}, to quantify trustworthiness in temporal setting in the next section.

\section{Temporal Reputation Model (TERM)}\label{ssec:term}
In crowdsourcing systems, mechanisms introduce \emph{reputation score} of an agent as a parameter of trust the system places in its submitted report. The system builds up this \emph{trust} in the agent, gradually after several instances of trustworthy behavior, and diminishes relatively quickly if the system observes adversarial behavior. This logic applies to any trust-based system such as Amazon Mechanical Turk~\cite{AMT}, Crowdflower~\cite{CrowdFlower}, etc.

Typically, reputation scores require to satisfy the following:
\begin{enumerate}[leftmargin=*]
    \item Builds trust in the agent gradually with honest behavior.
    \item To incorporate temporal setting, the increase in scores should be inversely proportional to the time taken to report.
    \item The score growth should decrease as it reaches the extreme and should not cross the maximum score allowed.
\end{enumerate}

In general, the existing literature for a reputation model in crowdsourcing only factors the report submitted by the agents~\cite{Gom2, softpenality, MCS}. Towards this, we propose \emph{Temporal Reputation Model} (TERM), which assigns TERM scores to agents considering both the accuracy of the report and the time taken to submit.

\subsection{Computation of TERM scores}
For TERM to satisfy the properties mentioned above, we use \emph{Gompertz function}~\cite{Gom2} whose variation is gradual, smooth, and is well suited for the model. Gompertz function is a particular case of sigmoid function, in which the growth at the start and end is slow. Several crowdsourcing mechanisms deploy this function to measure trust~\cite{Gom2, Gom1, Gom3}.
In TERM, we maintain the agents' history. 

\smallskip
\noindent\textbf{History ($\mathcal{H}$).} We maintain a history $\mathcal{H}$ of all the scores for every agent in each round. Let $\mathcal{H}_{i,j} = (\Omega_{i,j}, |\phi|_{i,j}, |\phi|_{i,j-1}, \ldots, |\phi|_{i,1})$ denote the history of agent $a_{i}$ till the round $r_{j}$, where $\Omega_{i,j}$ is the TERM score, $|\phi|_{i,j}$ is the normalized round-score obtained for the report submitted in the round $r_{j}$.

Algorithm~\ref{algo::Term} presents TERM, here, the  requester  maintains  frequency $f(y_i)$ of the report $y_i$. TERM calculates normalized round-scores of agents from the reports submitted and time taken for reporting (Lines 5-6). The cumulative-score calculation uses all the obtained normalized round scores until the latest round (Line 7). We take cumulative-score as input to the Gompertz function, whose output is the TERM score (Line 8). We now formally define TERM and give its properties.

Let $\Omega_{i,j}$ be the TERM score an agent $a_{i}$ obtains after round $r_{j}$. We define TERM score as,
\begin{equation}\label{eqn:func}\tag{TERM}
    \Omega_{i,j} = G({\psi}_{i,j}) = a\times \exp(b \times \exp(c \times{\psi}_{i,j}))
\end{equation}

where, $G(\cdot)$ is the Gompertz function with parameters $a \in \mathbb{R}$ controls the asymptote, $b \in \mathbb{R^{-}}$ sets the displacement, and $c \in \mathbb{R^{-}}$ controls the growth rate of the curve. We set $a=1$, $b=-1$, $c=-1/2$ for a smooth growth of TERM scores. Here, the input to Gompertz function is cummulative score ${\psi}_{i,j}$ (defined in Eq. \ref{eqn:cumscore})

Further, ${\phi}_{i,j}$ is \emph{round-score} obtained by agent $a_{i}$ for submitting $y_{i}$ in the round $r_{i}$ after time $t_{i}$, defined as,
\begin{equation}\label{eqn:r-s}
    {\phi}_{i,j} = \frac{\mathbb{I}_{y_{i} = y_{p}}}{f(y_{i})t_{i}}
\end{equation}
Where $y_{p}$ is the random peer's report chosen from the same task. 

We \emph{map} all the round-scores to $[-1,1]$ as follows, 

\begin{equation} \nonumber
    |\phi|_{i,j} =
    \begin{cases}
        \frac{{\phi}_{i,j} - min({\phi}^{j})}{max({\phi}^{j}) - min({\phi}^{j})}, & \text{if $max({\phi}^{j}) \neq min({\phi}^{j})$} \\
        0, & \text{otherwise}
    \end{cases}
\end{equation}

Here, $min({\phi}^{j})$ and $max({\phi}^{j})$ denote that minimum and maximum round-scores in the round $r_{j}$, respectively.

We then calculate \emph{cumulative-score} ${\psi}_{i,j}$ of each agent $a_i$ by taking all the normalized round-scores ($|\phi|_{i,j}$) it obtained till the latest round, as follows,

\begin{equation}\label{eqn:cumscore}
    {\psi}_{i,j} = \sum_{k=1}^{j} \lambda^{(j-k)} {|\phi|}_{i,k} \quad(0<\lambda<1)
\end{equation}
Trivially, $\lambda^{(j-k)}$ gradually reduces the impact of previous round-scores. As desired, TERM score is an aggregate of \emph{all} the submissions made by an agent and accounts for the fact that recent submissions are more relevant.

\begin{small}
\begin{algorithm}[!t]
Agent $a_{i}$ submits report $y_{i}$ for a task $\tau$ in round $r_{j}$ at time $t_{i}$.\\
\textbf{Input}: Report $y_{i}$, Time taken $t_{i}$, History $\mathcal{H}_{i,j-1}$\\
\textbf{Output}: Updated TERM score $\Omega_{i,j}$\\
Randomly choose a report $y_{p}$ of agent $a_p$ from the same task $\tau$.\\
${\phi}_{i,j} = \frac{\mathbb{I}_{y_{i} = y_{p}}}{f(y_{i})t_{i}}$ \Comment*[r]{round-scores calculation}
$|\phi|_{i,j} \leftarrow $ normalised ${\phi}_{i,j}$;\\
${\psi}_{i,j} = \sum_{k=1}^{j} \lambda^{(j-k)} {|\phi|}_{i,k}$ \Comment*[r]{cumulative-scores calculation}
\textbf{Return}: $\Omega_{i,j} = \exp(-\exp(\frac{-{\psi}_{i,j}}{2}))$ \Comment*[r]{TERM score calculation}
\caption{\label{algo::Term} $\text{TERM}(y_{i},t_{i},\mathcal{H}_{i,{j-1}})$}
\end{algorithm}
\end{small}

\subsubsection{TERM Score properties}  
Notice that Eq. \ref{eqn:func} used in the reputation model gradually increases with early reporting but reduces relatively fast with random reporting when the reports do not match. With this, one can observe that trustworthy reporting benefits the agents over random reporting. Hence, agents cannot manipulate their TERM score. 

We now consider a \emph{collusive} strategy wherein all agents collude to submit the same report, i.e., single report strategy. We prove that TERM score is resistant to such a strategy.
\begin{lemma}\label{claim:claim3}
    \emph{TERM} is resistant to single report strategy. 
\end{lemma}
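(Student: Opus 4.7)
The plan is to instantiate the single report strategy directly inside the TERM update and track, stage by stage, how colluders fare relative to what trustworthy play already achieves.

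First I would fix any collusive answer $y^{\ast}$ and assume every agent submits $y_i = y^{\ast}$ in round $r_j$. Then $\mathbb{I}_{y_i = y_p} = 1$ for every peer pairing and $f(y^{\ast}) = |\mathcal{A}|$, so by Eq.~\ref{eqn:r-s} the raw round-scores collapse to
\begin{equation*}
    \phi_{i,j} \;=\; \frac{1}{|\mathcal{A}|\, t_i}.
\end{equation*}
The content of the collusive report has disappeared from the expression; what remains is just a scaled reciprocal of the reporting time, identical in form for any collusive target $y^{\ast}$.

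Next I would push these values through the min--max normalization. Because the prefactor $1/|\mathcal{A}|$ is shared across all agents in the round it cancels, leaving $|\phi|_{i,j}$ as a function of the relative report times only. Hence the normalized SR round-score profile is independent of the collusive choice and matches the profile of any unanimous answer; in particular only the earliest reporter attains the value $1$ while every other agent sits strictly below. Then I would compare this to trustworthy play: under TS, a matching agent obtains $\phi = 1/(f(x_i)\, t_i^{\ast})$ with $f(x_i) \le |\mathcal{A}|$ (strictly smaller whenever truthful observations are spread across $\mathcal{X}$), so trustworthy round-scores dominate the SR ones in raw size whenever a match occurs. Propagating the comparison through the decayed cumulative sum $\psi$ in Eq.~\ref{eqn:cumscore} and then the monotone Gompertz aggregator in Eq.~\ref{eqn:func} preserves the ordering, so no agent strictly improves its expected TERM score by joining the collusion.

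The main obstacle will be step three. Since SR guarantees a match in every round while TS relies on the random peer sharing the agent's observation, the comparison must be carried out in expectation, and one has to show that the frequency-penalty advantage of truthful reports — amplified by the time-decayed accumulation into $\psi$ and the nonlinearity of $G(\cdot)$ — still dominates the occasional missed match under TS. Keeping this dominance clean across heterogeneous reporting times and the per-round normalization is the technical core of the argument.
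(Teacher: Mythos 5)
Your proposal stalls exactly at the step the paper's proof actually consists of. You correctly set up the single report strategy and reduce the colluders' round-score to a time-only expression, but you then assert that trustworthy round-scores ``dominate'' the collusive ones \emph{whenever a match occurs}, and defer the real question --- the comparison \emph{in expectation}, where SR matches with probability $1$ but TS matches only with the probability that the random peer agrees --- to a ``main obstacle'' you do not resolve. That obstacle is the entire content of the lemma, and it resolves by a one-line computation, not by a dominance argument: with $f(y)$ the \emph{fraction} of sampled reports equal to $y$ (this is how the paper's proof and Algorithm 2 define it, $f(y_i)=num(y_i)/\sum_y num(y)$, not the raw count $|\mathcal{A}|$ you use), unanimous collusion gives $f=1$ and a guaranteed match, hence expected round-score $1/t_i$; under trustworthy play with $l$ of $m$ agents reporting $x$, the match probability $l/m$ exactly cancels the frequency penalty $f(x)=l/m$, giving $\tfrac{l}{m}\cdot\tfrac{m}{l\,t_i}=1/t_i$ as well. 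The expected round-scores are \emph{equal}, not ordered as you claim; resistance follows because collusion yields no strict gain, and monotonicity of $\psi$ and $G(\cdot)$ transfers this to TERM scores.

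Two further points. First, your claimed strict dominance of TS over SR round-scores is false in expectation (they coincide), so ``propagating the ordering'' through the cumulative sum and the Gompertz function would propagate a statement that does not hold. Second, the min--max normalization step you invoke is a within-round operation across agents; it does not cleanly ``cancel'' the frequency prefactor when you compare a fully collusive round against a mixed or trustworthy round, because the score profiles being normalized are different in the two scenarios. The paper sidesteps this by comparing expected raw round-scores directly, which is the cleaner route.
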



In the next section, we introduce our novel iterative framework \reform\ for crowdsourcing which uses TERM as a key component.

\section{REFORM: Framework}

We now present \reform\, a novel iterative framework for crowdsourcing, based on Approach \ref{app:2} (Section~\ref{ssec:approach}). 
Intuitively, \reform\ incentivizes an agent to report truthfully by improving the expected reward of trustworthy agents. We achieve this increase in the expected reward by ingeniously offering reputed agents additional chance(s) of pairing. 
Framework 1 formally presents REFORM. Observe that in Framework 1, $\mathit{peer\mbox{-}fac}(\cdot)$ may be the reward scheme of any existing PBM. 
In \reform, based on the selection of the reward scheme, we evaluate an agent's report against the randomly chosen peer's report from the same task and reward the agent if the reports match (Lines 7-8). 
For temporal setting, we use TERM scores to decide whether to offer additional chance(s) of pairing to an agent. More concretely, if an agent's submitted report does not match with its peer's report, and if the agent has a TERM score \emph{higher} than that of its peer without having reached the maximum number of chances $k$, we give it another chance to pair. Otherwise, we penalize according to the reward scheme adopted (Lines 10-11).

\smallskip
\noindent\textit{Note.} We can plug any relevant reputation model instead of TERM in REFORM, and it still helps to improve the fairness of a PBM. As we focus on crowdsourcing in temporal setting, we employ our novel reputation model TERM for the same.

\begin{algorithm}[!t]
\renewcommand{\algorithmcfname}{Framework}
Agent $a_{i}$ submits a report $y_i$ for an assigned task $\tau$ at time $t_i \leq \delta_{\tau}$ in round $r_j$.\\
\KwInput {$\mathit{peer\mbox{-}fac}(\cdot)$, $k > 1$, $y_i$, $t_i$, History $\mathcal{H}_{i,j-1}$} 
\textbf{initialization}: $l=0$\\
 \While{$l<k$}{
   Randomly choose peer report $y_{p}$ from the same task $\tau$.
 \If{$l = 1$}{$\Omega_{i,j} = \text{TERM}(y_{i},t_{i},\mathcal{H}_{i,j-1})$ \Comment*[r]{update TERM score}}
  $l = l + 1$\\
  \If{$y_{i} = y_{p}$}
  {\tcc{reports match, agent gets optimal reward}
  {\textbf{Return: } $R_i(y_{i},t_{i}) = \mathit{peer\mbox{-}fac}(y_i|y_i = y_p) \times \beta(t_i)$} \\
  }
  \Else
  {
   \If {$\Omega_{i,j} \leq \Omega_{p,j} \lor l=k$}
   { \tcc{reputation score is less or maximum chances reached, no more pairing}
  \textbf{Return: } $R_i(y_{i},t_{i}) = \mathit{peer\mbox{-}fac}(y_i|y_i \neq y_p) \times \beta(t_i)$
  }
  }
 }
 \caption{\label{algo::reform}\reform}
\end{algorithm}

We next demonstrate REFORM’s significance by deploying the framework over RPTSC’s reward scheme. We first present the agent beliefs required in REFORM with RPTSC reward and its properties relevant to our work. Later we demonstrate its game-theoretic and fairness properties.

\subsection{REFORM with RPTSC reward scheme}

We focus on RPTSC~\cite{PTSC} than other PBMs because of its practicality. 
Briefly, RPTSC is desirable over other PBMs as it (i) does not assume any prior, (ii) incentivizes efforts and truthful reporting, (iii) uses a \emph{surprisingly common rule} to reward, and (iv) is resistant to single report strategy. The mechanism evaluates agents’ reports against a randomly chosen peer’s report, and each report $y_{i}$ is rewarded as follows:
    \begin{equation}\label{rptsc-reward}
        {\mathit{peer\mbox{-}fac}(y_i)}_{RPTSC} = 
        \begin{cases}
            \alpha\left(\frac{\mathbb{I}_{y_{i}=y_{p}}}{f(y_i)}-1\right), & \text{if }f(y_i) \neq 0 \\
            0, & \text{otherwise}
        \end{cases}
    \end{equation}
where, $\alpha > 0$ is a scalar constant used to tweak the reward budget. Further, $\mathbb{I}_{y_{i}=y_{p}}$ is an indicator variable and $f(y_i)$ is a frequency function, as defined previously.

\subsubsection{Agents Beliefs}
In our setting, every agent has their own private belief about other agents' reports and reputations. Thus, the agent's estimate of its expected reward depends on its belief. We identify the distributions, $P(\cdot)$, $Q(\cdot)$, and $T(\cdot)$, as agents' beliefs about evaluations, reports, and reputation scores, respectively; as defined next.

\subsubsection*{Agent Beliefs about Evaluations ($P$)}
The prior belief $P_{p}(x_p)$ denotes the probability with which agent $a_{p}$'s evaluation is $x_p$.
Consider $a_p$ as another agent who solves the same task.
Then, $P_{p|i}(x_p|x_i)$ is $a_i$'s posterior belief about  agent $a_{p}$'s evaluation being $x_p$ when its evaluation is $x_i$.
We assume that all the agents' beliefs are \emph{fully mixed}, i.e., they believe that every possible event has a minimum probability (strictly $> 0$) of occurrence. Mathematically, for all the agents,
$$
\forall x_i,x_p \in \mathcal{X}: 0 < P_{p}(x_p), P_{p|i}(x_p|x_i) <1.
$$
    
Since the tasks are statistically independent, if agent $a_{i}$ has not solved task $\tau_k$, it has no evaluation for that task, i.e., $x_{i}=\varnothing$. Therefore, agent $a_i$'s posterior belief about the evaluation of agent $a_q$ for task $\tau_k$ is $P_{q|i}(x_q|x_{i}) = P_{q|i}(x_q|\varnothing) = P_{q}(x_q) \forall x \in \mathcal{X}$. That is, the same as its prior belief.

\subsubsection*{Agent Beliefs about Reports ($Q$)}
To decide its best strategy, agent $a_i$ estimates its expected reward for reporting $y_{i}$ based on its beliefs about the other agents' reports. For this, it transforms its beliefs about others' evaluations into beliefs about reports; $(P_{p}, P_{p|i}) \to (Q_{p}, Q_{p|i})$. The posterior belief $Q_{p|i}(y_p|x_i)$ is the probability that agent $a_{p}$ reports $y_p$ when agent $a_i$'s evaluation is $x_i$. As random agents do not evaluate the task; they do not have posterior beliefs. They report according to their prior distribution. Formally, if all the agents choose random strategy, i.e., their evaluations are $\varnothing$, we have $Q_{p}(y) = Q_{q}(y) = P_{p}(y)$ and $Q_{p|i}(y|\varnothing) = P_{p|i}(y|\varnothing) = P_{p}(y) \forall y \in \mathcal{X}$. Likewise, when all the agents choose trustworthy strategy, i.e., all of them report their true evaluations, their beliefs about evaluations are the same as that of reports, $Q_{p}(y) = Q_{q}(y)=P_{p}(y)$; $Q_{p|i}(y)=P_{p|i}(y) \forall y \in \mathcal{X}$.


\noindent\textbf{Self-predicting Condition.}
If agent $a_{i}$ exerts high effort ($e_H$) for the task, to acquire an evaluation $x_i$, it develops a posterior belief $P_{p|i}(y_i|x_i)$ regarding the evaluations of peers. We assume that the posterior belief has a positive correlation with agent $a_i$'s evaluation $x_i$. This assumption is the \emph{self-predicting} condition defined as~\cite{PTSC}:
    $$
    \frac{P_{p|i}(x_i|x_i)}{P_{p}(x_i)} > \frac{P_{p|i}(y_i|x_i)}{P_{p}(y_i)} 
    $$

If an agent $a_{i}$'s belief satisfies the self-predicting condition, RPTSC~\cite{PTSC} introduces a \emph{self-predictor} $\Delta_{i}$ as a minimum value in  $[0,1]$ such that the following holds.
    \begin{equation}\label{eqn:cond1}
    \left(\frac{P_{p|i}(x_i|x_i)}{P_{p}(x_i)} - 1\right)\Delta_{i} > \frac{P_{p|i}(y_i|x_i)}{P_{p}(y_i)} - 1     
    \end{equation}

We redefine $\Delta_{i}$ as the infimum over all possible $\Delta_i \in [0,1]$ satisfying:
    \begin{equation}\label{eqn:A1}
        \left(\frac{P_{p|i}(x_i|x_i)}{P_{p}(x_i)}\right)\Delta_i > \frac{P_{p|i}(y_i|x_i)}{P_{p}(y_i)}
    \end{equation}

Self-predictor $\Delta_i$ characterizes an agent's belief regarding the degree of correlation of its evaluation with others' reports. The smaller the value of $\Delta_{i}$, greater the belief agent $a_i$ has on its evaluation. If $\Delta_{i}\approx 1$, the agent $a_i$ is more likely to confuse between different answers. In contrast, if $\Delta_{i}\approx 0$, the answers do not correlate. 

\subsubsection*{Agents Beliefs about Reputation Scores ($T$)}
We recognize an agent $a_i$'s beliefs regarding the reputation score of agent $a_{p}$ with a distribution ${T_{p}}$. $T(\cdot)$ is a distribution that is \textit{symmetric} and \textit{identical}. The distribution does not change with a particular peer, and all the agents share the same distribution, that is, 
        $$
        T_p=T_{q}; ~\forall a_p,a_q \in \mathcal{A}.
        $$ 
        
Particularly, when the underlying reputation score is TERM (Eq.~\ref{eqn:func}), we denote $T_{p}(\Omega_i,\Omega_p)=Pr(\Omega_i \geq \Omega_p)$ as the probability that an agent $a_p$ has its TERM score less than $\Omega_i$.

\subsubsection{RPTSC Properties} We now summarize the relevant properties of RPTSC required for the analysis of our results. 
To ease our notations, we use the following set of notations in the remainder of the paper.

\begin{itemize}[noitemsep,leftmargin=*]
    \item $p_{y_{i}} := P_{p}(y_{i})$;   $p'_{y_{i}} := P_{p|i}(y_{i}|x_{i})$; $p_{x_{i}} := P_{p}(x_{i})$
    \item $p'_{x_{i}} := P_{p|i}(x_{i}|x_{i})$; $q_{y_{i}} := Q_{p}(y_{i})$; $q'_{y_{i}} := Q_{p|i}(y_{i}|x_{i})$
    \item $q_{x_{i}} := Q_{p}(x_{i})$; $q'_{x_{i}} := Q_{p|i}(x_{i}|x_{i})$; $r = T_{p}(\Omega_i,\Omega_p)$
\end{itemize}


\begin{proposition}\label{prop:p1} \textup{\cite[Lemma 4.1]{PTSC}}
In RPTSC, the expected reward of agent $a_{i}$ with evaluation $x_{i}$ and report $y_{i}$ is,
\begin{equation}
    \nonumber
    E'=
    \begin{cases}
        \alpha\left(\frac{q'_{y_{i}}}{q_{y_{i}}} - 1\right)\left(1 - \left(1 - q_{y_{i}}\right)^{n-1}\right), & \text{if }q_{y_{i}}>0\\
        0, & otherwise
    \end{cases}
\end{equation}  
\end{proposition}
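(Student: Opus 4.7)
Since Proposition~\ref{prop:p1} restates Lemma~4.1 of~\cite{PTSC} in our notation, my plan is to re-derive it by directly computing the expectation of the reward scheme~\eqref{rptsc-reward} under agent~$a_i$'s posterior beliefs $Q_{p|i}(\cdot \mid x_i)$ over peer reports. The first observation is that $\mathit{peer\mbox{-}fac}(y_i)_{RPTSC}$ is identically zero whenever $f(y_i)=0$, so the expectation naturally splits by conditioning on the event $\{f(y_i)\geq 1\}$. The degenerate case $q_{y_i}=0$ is immediate, since then every peer reports $y_i$ with probability zero under the agent's belief and the expected reward collapses to $0$, yielding the second branch of the piecewise definition.

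For the non-degenerate case, I would evaluate two factors separately. Each of the $n-1$ other agents independently reports $y_i$ with marginal probability $q_{y_i}$ under $a_i$'s prior, so $\Pr\bigl[f(y_i)\geq 1\bigr] = 1 - (1-q_{y_i})^{n-1}$; this produces the right-most factor in the target expression. Conditional on $\{f(y_i)\geq 1\}$, I would evaluate $\alpha\,\mathbb{E}\bigl[\mathbb{I}_{y_i=y_p}/f(y_i) - 1\bigr]$ using the ``surprisingly common'' unbiasedness property underlying RPTSC: since $y_p$ is drawn uniformly at random from the $n-1$ peers and contributes to $f(y_i)$ exactly when the indicator is~$1$, a standard counting argument identifies $\mathbb{I}_{y_i=y_p}/f(y_i)$ as an unbiased estimator of $q'_{y_i}/q_{y_i}$. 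Subtracting the constant~$1$ and multiplying by $\alpha$ yields the per-pairing contribution $\alpha\bigl(q'_{y_i}/q_{y_i} - 1\bigr)$, and combining it with the probability factor gives the claimed closed form.

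The main obstacle is precisely the evaluation of $\mathbb{E}[\mathbb{I}_{y_i=y_p}/f(y_i)]$: because the numerator and denominator are coupled (when $\mathbb{I}_{y_i=y_p}=1$ the sampled peer is itself counted in $f(y_i)$), the expectation does not factor naively into $\mathbb{E}[\mathbb{I}_{y_i=y_p}]\cdot\mathbb{E}[1/f(y_i)]$, and I would need to invoke the PTSC unbiasedness trick directly rather than treat the two quantities independently. Once this step is in place, the remaining manipulations are routine given the fully-mixed assumption on $Q_{p|i}$ and the independence of the $n-1$ peer reports, so I would then simply collect terms to produce the stated formula.
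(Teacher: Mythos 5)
First, a point of reference: the paper does not actually prove this statement---it is imported verbatim from \cite[Lemma 4.1]{PTSC}, and the appendix subsection headed ``Proof of Proposition~\ref{prop:p1}'' is a mislabeled proof of the $\gamma$-fairness claim (Proposition~\ref{ptsc-fair}), not of the expected-reward formula. So your re-derivation is new content relative to the paper; the closest in-paper computation to compare against is the round-score expectation in the appendix proof of Lemma~\ref{claim:claim1}, which performs exactly the binomial manipulation your sketch needs.

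Your overall decomposition (the degenerate case $q_{y_i}=0$; the factor $1-(1-q_{y_i})^{n-1}$ from $n-1$ independent sampled reports, each matching $y_i$ with the \emph{prior} probability $q_{y_i}$ because they come from statistically independent tasks; the conditional factor $\alpha(q'_{y_i}/q_{y_i}-1)$) is the right shape. However, the step you flag as the ``main obstacle'' rests on a misreading of the RPTSC construction, and the resolution you propose is a black box. In RPTSC (and in Framework~\ref{algo::reform-rptsc} here), the peer report $y_p$ is drawn from the \emph{same} task, while $f(y_i)$ is computed from reports sampled one-per-task from the \emph{other} $n-1$ tasks; the sampled peer is therefore \emph{not} counted in $f(y_i)$, the numerator and denominator are independent, and $\mathbb{E}\bigl[\mathbb{I}_{y_i=y_p}\cdot\mathbb{I}_{f>0}/f(y_i)\bigr]$ factors as $q'_{y_i}\,\mathbb{E}\bigl[\mathbb{I}_{f>0}/f(y_i)\bigr]$. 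This independence across tasks is precisely what makes RPTSC ``robust''; there is no coupling to work around. What then remains---and what your sketch asserts without proof via an unnamed ``unbiasedness trick''---is the identity $\mathbb{E}\bigl[\mathbb{I}_{f>0}/f(y_i)\bigr]=\tfrac{1}{q_{y_i}}\bigl(1-(1-q_{y_i})^{n-1}\bigr)$, which follows from the binomial identity $\binom{N}{b}\tfrac{1}{b+1}=\tfrac{1}{N+1}\binom{N+1}{b+1}$ applied to the count of matching sampled reports (the same telescoping used in the paper's proof of Lemma~\ref{claim:claim1}). Without that computation your argument does not close: as written, it is not even clear that your claimed conditional unbiasedness $\mathbb{E}[\mathbb{I}_{y_i=y_p}/f(y_i)\mid f(y_i)\geq 1]=q'_{y_i}/q_{y_i}$ holds, since it depends on the exact convention for which reports enter $f$ (note the paper itself is inconsistent here, obtaining exponent $n$ in Lemma~\ref{claim:claim1}'s proof but $n-1$ in Proposition~\ref{prop:p1}). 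You should replace the appeal to unbiasedness with the explicit factorization-plus-binomial-sum argument.
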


Let $M'$ be the optimal reward of a trustworthy agent $a_{i}$ in RPTSC (i.e., if its report matches its peer's report, $q'_{y_{i}} = 1$ in Prop. \ref{prop:p1}). This implies that,
    \begin{equation}\label{eqn:optreward}
        \begin{split}
            M' &= \alpha \left(\frac{1}{q_{y_{i}}} - 1\right)\left(1 - \left(1 - q_{y_{i}}\right)^{n-1}\right).
        \end{split}
    \end{equation}

In RPTSC, the expected reward of an agent $a_{i}$ before evaluation when all the other agents choose trustworthy strategy is, from~\cite[Eq. 8]{PTSC}, 
\begin{equation}\label{eqn:R'} 
  \overline{R}_{i}(\alpha) = \mathbb{E}_{x_i \in \mathcal{X}}\left[\alpha\left(\frac{p'_{x_{i}}}{p_{x_{i}}} - 1\right)\left(1 - \left(1 - p_{x_{i}}\right)^{n-1}\right)\right],
\end{equation}

\begin{proposition}\label{prop:p2} \textup{\cite[Theorem 4.3]{PTSC}}
    Trustworthy strategy is strict Nash Equilibrium in RPTSC, if all the agents $a_{i}$, answers space $\mathcal{X}$ , scaling factor $\alpha$ and number of tasks $n$ satisfy:
    \begin{equation}
        \begin{split}
            &\mathsf{A} : \overline{R}_{i}(\alpha) > c(e_{H}) - c(e_{L}) \\
            &\mathsf{B} : \frac{1-(1-p_{x_{i}})^{n-1}}{1-{p_{x_{i}}}^{n-1}} \geq \Delta_{i} \nonumber
        \end{split}
    \end{equation} 
\end{proposition}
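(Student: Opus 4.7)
The plan is to verify that under conditions $\mathsf{A}$ and $\mathsf{B}$, no unilateral deviation from the trustworthy profile $s^{TS}$ is profitable. Fix an arbitrary agent $a_i$ and assume every other agent plays trustworthy. Because truthful reports coincide with evaluations, the peer's report distribution equals the peer's evaluation distribution, so in Proposition \ref{prop:p1} I may substitute $q_{y_i} = p_{y_i}$ and $q'_{y_i} = p'_{y_i}$ throughout. I would then rule out two orthogonal kinds of deviations: (i) changing the effort level, and (ii) keeping high effort but misreporting. Any mixed strategy is a convex combination of these, so defeating both pure deviations suffices.

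For the effort dimension, suppose $a_i$ exerts $e_L$. Then $a_i$ obtains no evaluation ($x_i = \varnothing$), and its posterior collapses to the prior, so $q'_{y_i} = q_{y_i} = p_{y_i}$ for whatever $y_i$ it reports. Plugging this into Proposition \ref{prop:p1} gives $E' = 0$, hence the best low-effort utility is $-c(e_L)$. Its trustworthy utility, averaged over the evaluation realized after exerting $e_H$, is $\overline{R}_i(\alpha) - c(e_H)$ with $\overline{R}_i$ from Eq. \ref{eqn:R'}. Condition $\mathsf{A}$ is exactly the statement $\overline{R}_i(\alpha) - c(e_H) > -c(e_L)$, ruling out the effort deviation.

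For the reporting dimension, fix any realized $x_i$ and any report $y_i \neq x_i$. Truthful reporting yields
\begin{equation}\nonumber
E_{\text{true}} = \alpha\left(\frac{p'_{x_i}}{p_{x_i}} - 1\right)\left(1 - (1-p_{x_i})^{n-1}\right),
\end{equation}
while misreporting $y_i$ yields the same expression with $x_i$ replaced by $y_i$. The self-predictor characterization (Eq. \ref{eqn:A1}) supplies $p'_{y_i}/p_{y_i} < \Delta_i \cdot p'_{x_i}/p_{x_i}$. I would then combine this with condition $\mathsf{B}$, rewritten as $1-(1-p_{x_i})^{n-1} \geq \Delta_i\bigl(1 - p_{x_i}^{n-1}\bigr)$, using the elementary bound $1-(1-p_{y_i})^{n-1} \leq 1 - p_{x_i}^{n-1}$ (which applies on the relevant range since $p_{y_i} + p_{x_i} \leq 1$). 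Multiplying through, the $\Delta_i$ factor supplied by the self-predictor is exactly absorbed by $\mathsf{B}$, which delivers the strict inequality $E_{\text{true}} > E_{\text{misreport}}$ for every $y_i \neq x_i$. Averaging over $x_i \sim p$ preserves strictness, so no high-effort misreport is profitable either.

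The main obstacle, and the reason condition $\mathsf{B}$ is needed at all, is that the expected reward factorises into a likelihood-ratio term and a coverage term $(1-(1-p)^{n-1})$ that depend on two \emph{different} probabilities ($p_{x_i}$ vs.\ $p_{y_i}$). The self-predictor alone controls only the likelihood-ratio part, and it is condition $\mathsf{B}$ that tames the coverage asymmetry, converting a pointwise ratio bound into a bound on the products. Once both deviations are ruled out strictly, strict NIC as defined in Eq. \ref{eqn::nic} follows immediately.
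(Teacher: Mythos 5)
First, a point of comparison: the paper does not actually prove this proposition --- it is imported verbatim from the RPTSC paper as \textup{[PTSC, Theorem 4.3]}, and the appendix subsection headed ``Proof of Proposition~\ref{prop:p2}'' in fact contains the $\gamma$-fairness computation for \reform\ with RPTSC (i.e., the proof of Proposition~\ref{reform-fair}), not a proof of this statement. So you are reconstructing the original RPTSC argument rather than an argument the paper supplies. Your reconstruction has the right architecture and matches the original: rule out the effort deviation via condition $\mathsf{A}$ (a low-effort agent has posterior equal to prior, so $q'_{y_i}=q_{y_i}$ and Proposition~\ref{prop:p1} gives expected reward $0$, hence utility $-c(e_L)$, which $\mathsf{A}$ dominates), and rule out high-effort misreporting via the self-predictor together with condition $\mathsf{B}$ and the coverage bound $1-(1-p_{y_i})^{n-1}\leq 1-p_{x_i}^{n-1}$.

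There is one step that does not close as written. You invoke the ratio form of the self-predictor (Eq.~\ref{eqn:A1}), $p'_{y_i}/p_{y_i} < \Delta_i\, p'_{x_i}/p_{x_i}$, and claim that ``multiplying through'' with $\mathsf{B}$ yields $E_{\text{true}}>E_{\text{misreport}}$. But the reward is $\alpha(\text{ratio}-1)\cdot C$ with $C_x=1-(1-p_{x_i})^{n-1}$ and $C_y=1-(1-p_{y_i})^{n-1}$; multiplying the ratio bound only controls $\frac{p'_{x_i}}{p_{x_i}}C_x$ versus $\frac{p'_{y_i}}{p_{y_i}}C_y$ and leaves the terms $-C_x$ versus $-C_y$ uncompared. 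Since $C_x$ can exceed $C_y$ (when $p_{x_i}>p_{y_i}$), these residual terms favor the misreport, so the product inequality alone does not deliver the conclusion. The fix is standard: use the $(\cdot-1)$ form of the self-predictor, Eq.~\ref{eqn:cond1} (which Eq.~\ref{eqn:A1} implies because $\Delta_i\leq 1$), and split on the sign of $p'_{y_i}/p_{y_i}-1$. If it is nonpositive, the misreport's expected reward is nonpositive while the truthful reward is strictly positive by the self-predicting condition. If it is positive, then $\left(\frac{p'_{y_i}}{p_{y_i}}-1\right)C_y < \Delta_i\left(\frac{p'_{x_i}}{p_{x_i}}-1\right)C_y \leq \Delta_i\left(\frac{p'_{x_i}}{p_{x_i}}-1\right)\left(1-p_{x_i}^{n-1}\right) \leq \left(\frac{p'_{x_i}}{p_{x_i}}-1\right)C_x$, where the middle step is your coverage bound and the last is condition $\mathsf{B}$. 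With that repair (strictness coming from the strict self-predictor inequality), the rest of your argument --- averaging over $x_i$ and reducing mixed deviations to the two pure ones --- is sound.
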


To simplify the proofs, we use Assumption $\mathsf{B}_1$, which is obtained from Assumption $\mathsf{B}$ by putting $n:=n+1$.
\begin{equation}\label{eqn:A2}
    \mathsf{B}_1 : \frac{1-(1-p_{x_{i}})^{n}}{1-{p_{x_{i}}}^{n}} \geq \Delta_{i}    
\end{equation}


\begin{algorithm}[!t]
\SetAlgoLined
 agent $a_{i}$ submits a report for an assigned task $\tau$ in round $r_{j}$.\\
\KwInput {$k > 1$, report $y_{i}$, time taken $t_{i} \leq \delta_{\tau}$, History $\mathcal{H}_{i,j}$}
\KwResult{Reward, $R_i(y_{i},t_{i})$}
 \textbf{initialization}: $l=1$. Randomly sample $n-1$ reports, one from each task other than $\tau$\\
 \While{$l<k$}{
   Randomly choose another report $y_{p}$ from the same task $\tau$. Calculate the frequency of $y_{i}$ from the sampled reports, as $f(y_{i}) = \frac{num(y_{i})}{\sum_{y\in \mathcal{X}}num(y)}$\\
 \If{$l = 1$}{$\Omega_{i,j} = \text{TERM}(y_{i},t_{i},\mathcal{H}_{i,j})$ \Comment*[r]{update TERM score}}
  $l = l + 1$\\
  \If{$y_{i} = y_{p}$}
  {
  {\textbf{Return: }$R_i(y_{i},t_{i}) = \alpha\beta(t_{i})\left(\frac{1}{f(y_{i})} - 1\right)$} \Comment*[r]{reports match, algorithm ends}
  }
  \Else
  {
   \If {$\Omega_{i,j} \leq \Omega_{p,j} \lor l=k$}
   { \tcc{TERM score is less or maximum chances reached, no more pairing}
   \If{$f(y_{i}) \neq 0$}
   {
   \textbf{Return: }$R_i(y_{i},t_{i}) = -\alpha\beta(t_{i})$ \Comment*[r]{obtains penalty, algorithm ends}
   }
   \Else{
   \textbf{Return: }$R_i(y_{i},t_{i}) = 0$ \Comment*[r]{frequency of report is 0, algorithm ends}
   }
  }
  }
 }
 \caption{\label{algo::reform-rptsc}\reform\ with RPTSC}
\end{algorithm}

\subsection{REFORM with RPTSC: Game-Theoretic Analysis}

We now present the essential properties and results for REFORM with RPTSC reward scheme. Crucially, our results \textit{do not} require any further assumptions, over the standard RPTSC assumptions, i.e., $\mathsf{A}$ and $\mathsf{B}_1$. Due to space constraints, we omit the formal proofs of the results. The corresponding proofs are available in the supplementary material.

Within RPTSC assumptions, TERM scores are high for truthful and early reporting. Observe that TERM scores increase with cumulative-score, which aggregates all the normalized round scores. Hence, an increase in round-score increases the TERM score. Having the self-predicting condition Eq.~\ref{eqn:A1}, round-scores are always higher for truthful reports. With this, we give the following lemma.

\begin{lemma}\label{claim:claim1}
        Given all the other agents choose trustworthy strategy, \emph{TERM} incentivizes an agent to report truthfully and early under Assumption $\mathsf{B}_1$ $(\emph{Eq.}~\ref{eqn:A2})$.
\end{lemma}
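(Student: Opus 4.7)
The plan is to exploit the monotonic structure of TERM to reduce the lemma to showing that the expected round-score $\mathbb{E}[\phi_{i,j}]$ is maximized by truthful and early reporting, and then to invoke the self-predicting condition together with Assumption~$\mathsf{B}_1$ in a manner analogous to the RPTSC truthfulness proof (Proposition~\ref{prop:p2}).

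First I would chain together three monotonicities. The Gompertz map $G(\psi) = \exp(-\exp(-\psi/2))$ is strictly increasing in $\psi$, so $\Omega_{i,j}$ is strictly increasing in $\psi_{i,j}$. The cumulative score $\psi_{i,j} = \sum_{k=1}^{j}\lambda^{j-k}|\phi|_{i,k}$ is strictly increasing in the current normalized round-score $|\phi|_{i,j}$ (the coefficient $\lambda^{0}=1$ is positive), and $|\phi|_{i,j}$ is a monotone min-max rescaling of $\phi_{i,j}$. Hence the expected TERM score is monotone in $\mathbb{E}[\phi_{i,j}]$, so it suffices to argue that truthful, early reporting maximizes the expected round-score.

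Next, the temporal half is immediate: since $\phi_{i,j} = \mathbb{I}_{y_i=y_p}/(f(y_i)\,t_i)$ has $t_i$ only in a positive denominator, $\mathbb{E}[\phi_{i,j}]$ is strictly decreasing in $t_i$, so the agent strictly prefers to report at the earliest available time $t_i^{*}$ once its observation is ready. For the truthful half, assume every other agent is truthful, so each peer's report equals its evaluation and the auxiliary samples used to form $f(y_i)$ are i.i.d.\ with prior $p_{y_i}$. Following the derivation of Proposition~\ref{prop:p1} (but omitting the $-1$ that appears in the RPTSC peer-factor), one obtains an expected round-score of the form
\begin{equation}\nonumber
\mathbb{E}[\phi_{i,j}\mid x_i, y_i] \;=\; \frac{1}{t_i}\cdot \frac{p'_{y_i}}{p_{y_i}}\cdot \kappa(p_{y_i}),
\end{equation}
where $\kappa(\cdot)$ is the positive frequency-positivity factor arising from the RPTSC-style analysis of $f$. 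Comparing $y_i=x_i$ with any $y_i\neq x_i$, the self-predicting condition (Eq.~\ref{eqn:A1}) gives $(p'_{x_i}/p_{x_i})\,\Delta_i > p'_{y_i}/p_{y_i}$, while Assumption~$\mathsf{B}_1$ furnishes a lower bound by $\Delta_i$ on the relevant ratio of $\kappa$-factors; multiplying the two and cancelling $\Delta_i$ exactly as in the proof of Proposition~\ref{prop:p2} yields $\mathbb{E}[\phi_{i,j}\mid x_i, x_i] > \mathbb{E}[\phi_{i,j}\mid x_i, y_i]$ for every $y_i\neq x_i$.

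The main obstacle will be the careful bookkeeping around the event $f(y_i)=0$, on which $\phi_{i,j}$ is defined to be $0$: the expectation must be computed by conditioning on positivity of $f(y_i)$ and then invoking the binomial distribution of the auxiliary counts. This mirrors the derivation behind Proposition~\ref{prop:p1} with the minor index shift (precisely the $n\to n+1$ substitution that converts $\mathsf{B}$ into $\mathsf{B}_1$), after which the algebra combining the self-predicting condition with $\mathsf{B}_1$ closes the argument; pairing this with the temporal monotonicity established above completes the proof.
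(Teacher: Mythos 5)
Your proposal follows essentially the same route as the paper's proof: reduce via the monotonicity of the Gompertz map and the cumulative score to maximizing the expected round-score, compute that expectation over the binomial counts to obtain $\frac{p'_{y_i}}{p_{y_i}\,t_i}\left(1-(1-p_{y_i})^{n}\right)$, apply the self-predicting condition together with Assumption $\mathsf{B}_1$ to show truth-telling dominates, and use $t_i \geq t_i^{*}$ for the temporal part. The approach and the key steps match; your extra care about the $f(y_i)=0$ event is moot in the paper's accounting (the agent's own report is counted, so $f(y_i)=\tfrac{b+1}{n}>0$), but it does not change the argument.
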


Having TERM's incentive properties, we now provide the expected reward of RPTSC in \reform\ framework.

\begin{lemma}\label{lemma:expreward}
In \reform\ with RPTSC, the expected reward of agent $a_{i}$ with evaluation $x_{i}$ and report $y_{i}$ at time $t_{i}$ is, $\mathbb{E}[R_i(y_{i}|x_{i});k=2] =$
    \begin{equation}
    \begin{cases}
        \alpha\beta(t_{i})\left[\frac{q'_{y_{i}}}{q_{y_{i}}}-1 + r(1-q'_{y_{i}})\frac{q'_{y_{i}}}{q_{y_{i}}}\right]\left[1-(1-q_{y_{i}})^{n-1}\right], & \text{if }q_{y_{i}} > 0\nonumber\\
        0, & \text{otherwise}
    \end{cases}
\end{equation}
Where $k=2$ and $n$ is the number of tasks.
\end{lemma}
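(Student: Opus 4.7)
The plan is to compute $\mathbb{E}[R_i]$ by tracing REFORM with RPTSC for $k=2$, where the algorithm grants the agent at most two pairing attempts (the second occurring only when the first peer's report differs from $y_i$ \emph{and} the agent's TERM score exceeds that peer's), and to reduce the random $1/f(y_i)$ appearing in the match-reward to $1/q_{y_i}$ in expectation by invoking Proposition \ref{prop:p1}. I would work conditional on the event $A := \{f(y_i) \neq 0\}$; outside $A$ the algorithm returns $0$, which produces the $q_{y_i} = 0$ boundary case, and because the $n-1$ frequency samples are drawn once at initialization from independent other tasks and are shared between the two pairings, Proposition \ref{prop:p1} gives $\Pr(A) = 1 - (1-q_{y_i})^{n-1}$, already matching the outer bracket of the stated formula.

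Conditional on $A$, the expectation decomposes over three independent binary events: the matches $M_t = \{y_p^{(t)} = y_i\}$ on attempts $t=1,2$, each of posterior probability $q'_{y_i}$ and mutually independent because the two peer reports are sampled independently from the same task; and the reputation comparison $H = \{\Omega_p < \Omega_i\}$ of probability $r = T_p(\Omega_i,\Omega_p)$, independent of $M_1,M_2$ by the symmetric-and-identical property of $T$. Tracing the algorithm yields four outcome branches: (i) immediate match with probability $q'_{y_i}$ and reward $\alpha\beta(t_i)(1/q_{y_i} - 1)$; (ii) mismatch without retry with probability $(1-q'_{y_i})(1-r)$ and reward $-\alpha\beta(t_i)$; (iii) mismatch followed by a successful retry with probability $r(1-q'_{y_i})q'_{y_i}$ and reward $\alpha\beta(t_i)(1/q_{y_i} - 1)$; and (iv) mismatch followed by a failed retry with probability $r(1-q'_{y_i})^2$ and reward $-\alpha\beta(t_i)$.

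Collecting match-type branches (combined probability $q'_{y_i}[1 + r(1-q'_{y_i})]$) against penalty-type branches (combined probability $(1-q'_{y_i})[1 - r q'_{y_i}]$), the cross-terms $r q'_{y_i}(1-q'_{y_i})$ cancel, the constant $-1$ coefficients sum to $-1$, and the $1/q_{y_i}$ coefficients combine to $\frac{q'_{y_i}}{q_{y_i}}[1 + r(1-q'_{y_i})]$, giving $\mathbb{E}[R_i\mid A] = \alpha\beta(t_i)\left[\frac{q'_{y_i}}{q_{y_i}} - 1 + r(1-q'_{y_i})\frac{q'_{y_i}}{q_{y_i}}\right]$; multiplying by $\Pr(A)$ recovers the claim. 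The main obstacle is the replacement of the random $1/f(y_i)$ by the belief quantity $1/q_{y_i}$ in the match branches, which is exactly the manipulation carried out in the proof of Proposition \ref{prop:p1}; quoting that proposition bypasses the frequency expectation and reduces the remainder of the argument to the branch-wise bookkeeping above.
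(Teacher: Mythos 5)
Your proposal is correct and follows essentially the same route as the paper's proof: condition on the reputation comparison (probability $r$) and the match/mismatch outcomes of the at most two pairings, delegate the $1/f(y_i)$ frequency expectation to Proposition \ref{prop:p1} (and Eq.~\ref{eqn:optreward} for the match branch), and collect terms. The only difference is presentational --- the paper writes the expectation compactly as $\beta(t_i)\bigl[(1-r)E' + r\bigl(q'_{y_i}M' + (1-q'_{y_i})E'\bigr)\bigr] = \beta(t_i)\bigl[E' + rq'_{y_i}(M'-E')\bigr]$ and then substitutes, whereas you enumerate the four leaf branches and regroup, which yields the same algebra.
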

\begin{proof} (SKETCH).
    Trivially, if $q_{y_{i}} = 0$ the expected reward is $0$. For $q_{y_{i}} \neq 0$, we calculate the expected reward as follows: Observe that when $k=1$, \reform\ with RPTSC has similar reward as RPTSC, and hence, the expected reward for $k=1$ is the decay factor times the expected reward in RPTSC, i.e., $\beta(t_{i})\times E'$ (refer Prop.~\ref{prop:p1}).

    When $k=2$, if the agent's TERM score is less than that of its peers', there is no additional chance of pairing; therefore, the expected reward is the same as $\beta(t_{i})\times E'$. 
    However, if the agent's TERM score is high, the agent gets an additional chance. In this case, 
    (i) If reports match in the first pairing, the agent gets a reward which is the decay factor times the optimal reward $M'$ (Eq.~\ref{eqn:optreward}), i.e., $\beta(t_{i})\times M'$.
    (ii) Otherwise, the agent can get an additional chance, and the reward is $\beta(t_{i})\times E'$. All the cases mentioned above, calculated with their probabilities of occurrence, give the expected reward an agent obtains in \reform\ framework with RPTSC reward.
\end{proof}

Note that the expected reward in REFORM with RPTSC is greater than RPTSC's expected reward (Prop. \ref{prop:p1}). This shows the desirability of \reform\, i.e., the agents with higher TERM scores (i.e., TAs) are benefited significantly more in our framework. Naturally, the increase in the expected reward will increase as the number of chances for pairing. We formalize this intuitive result in Corollary~\ref{cor::k}.
    
\begin{corollary}\label{cor::k}
In \reform, the expected reward increases with an increase in additional chances, $k$.
\end{corollary}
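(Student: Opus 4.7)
The plan is to prove the claim by induction on $k$, extending the case analysis used in the proof of Lemma~\ref{lemma:expreward}. Let $E_k := \mathbb{E}[R_i(y_i \mid x_i); k]$ denote the expected reward under the \reform-with-RPTSC procedure capped at $k$ pairings. I will first derive a one-step recursion expressing $E_k$ in terms of $E_{k-1}$, and then reduce monotonicity to a single sign check supplied by the base case already implicit in Lemma~\ref{lemma:expreward}.

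To set up the recursion, I would condition on the outcome of the first pairing. Exactly the three disjoint events from the $k=2$ analysis arise: with probability $q'_{y_i}$ the reports match and the process terminates with the one-shot optimal reward (multiplied by the usual $1-(1-q_{y_i})^{n-1}$ factor accounting for $f(y_i)>0$); with probability $(1-q'_{y_i})(1-r)$ the reports disagree and the TERM-score comparison fails, terminating with the penalty; and with probability $(1-q'_{y_i})r$ the agent proceeds to a fresh iteration which is statistically identical to the original problem with cap reduced to $k-1$. This yields a linear recurrence of the form $E_k = C + (1-q'_{y_i})\,r\,E_{k-1}$ where $C$ does not depend on $k$, so $E_k - E_{k-1} = (1-q'_{y_i})\,r\,(E_{k-1} - E_{k-2})$; in particular the sign of successive gaps is preserved.

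For the base case, combining $E_1 = \beta(t_i)\,E'$ from Proposition~\ref{prop:p1} with the closed form in Lemma~\ref{lemma:expreward} yields $E_2 - E_1 = \alpha\beta(t_i)\,r\,(1-q'_{y_i})\,\tfrac{q'_{y_i}}{q_{y_i}}\,\bigl(1-(1-q_{y_i})^{n-1}\bigr)$, which is strictly positive under the fully-mixed belief assumption and $r>0$. Iterating the recurrence then gives $E_{k+1} - E_k = [(1-q'_{y_i})\,r]^{k-1}(E_2 - E_1) > 0$, completing the induction. The main conceptual step, rather than any heavy calculation, is justifying that each recursive iteration truly is statistically identical to the original so that the recurrence is homogeneous: the agent's TERM score $\Omega_{i,j}$ is computed only once per call (the TERM update is guarded by $l=1$ in Framework~2), so $r = T_p(\Omega_i,\Omega_p)$ is unchanged across iterations by the symmetry and identity of $T$, and each fresh peer's report is drawn independently from $Q_p$. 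Once these observations are in place, the remainder is elementary bookkeeping.
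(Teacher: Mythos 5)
Your proof is correct, and it takes a cleaner route than the paper's. The paper unrolls the nested expectation for general $k$ into an explicit closed form, namely $rq'_{y_{i}}M'\sum_{i=1}^{k}(r-rq'_{y_{i}})^{i-1} + E'\bigl((r-rq'_{y_{i}})^{k-1} + (1-r)\sum_{i=2}^{k}(r-rq'_{y_{i}})^{i-2}\bigr)$, and then asserts monotonicity in $k$ by inspection ("every term is positive"). That justification is actually looser than it looks: the isolated term $E'(r-rq'_{y_{i}})^{k-1}$ \emph{decreases} as $k$ grows, so term-by-term positivity alone does not give monotonicity of the whole expression; one has to check that the newly added summands outweigh that loss. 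Your one-step recurrence $E_k = C + (1-q'_{y_i})\,r\,E_{k-1}$ with $C$ independent of $k$, followed by telescoping of the differences $E_{k+1}-E_k = [(1-q'_{y_i})r]^{k-1}(E_2-E_1)$, sidesteps this entirely and reduces the whole corollary to the sign of $E_2-E_1$, which you correctly read off from Lemma~\ref{lemma:expreward} and Proposition~\ref{prop:p1}. Your attention to why the recursion is homogeneous (the TERM score $\Omega_{i,j}$ is computed once, $r$ is the same against every fresh peer by the symmetry and identity of $T$, and peers' reports are drawn independently) makes explicit an independence assumption the paper uses silently. The only caveat worth recording is the one you already note: strict increase requires $r>0$ together with the fully mixed beliefs giving $0<q'_{y_i}<1$ and $q_{y_i}>0$; if $r=0$ the reward is constant in $k$, which is consistent with the corollary's (non-strict) statement.
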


\subsubsection*{REFORM with RPTSC: Nash Incentive Compatibility.}
  We now prove that in \reform\ with RPTSC's reward, it is strict Nash equilibrium for every agent to choose trustworthy strategy.
  To report truthfully, each agent must first exert high efforts. To incentivize agents to exert efforts, their expected utility must be strictly greater than randomly reporting (by investing low efforts). Considering all the agents are trustworthy (i.e., $q'_{y_{i}} = p'_{x_{i}}, q_{y_{i}} = p_{x_{i}}$), agent $a_{i}$'s expected reward for reporting at time $t_{i}$, before its evaluation is,
\begin{equation}\label{eqn:R}
    \begin{split}
        \overline{Ref}_{i}(\alpha) = &\alpha \beta(t_{i}) \mathbb{E}_{x_i\in \mathcal{X}} \bigg[\left(\frac{p'_{x_{i}}}{p_{x_{i}}} - 1 + rp'_{x_{i}}\frac{(1-p'_{x_{i}})}{p_{x_{i}}}\right)\\
        & \quad \left(1 - \left(1 - p_{x_{i}}\right)^{n-1}\right)\bigg].
    \end{split}
\end{equation}

We provide a random agent's expected reward, $E_{ra}$, in Lemma~\ref{lemma:random}. We then prove that \reform\ with RPTSC incentivizes agents to exert efforts by showing that the expected utility of an agent for exerting high efforts is strictly greater than random reporting (Lemma~\ref{lemma:IC1}).

\begin{lemma}\label{lemma:random}
    In \reform\ with RPTSC, the expected reward of a random agent $a_{i}$ with report $y_{i}$ at time $t_{i}$ when all the other agents choose trustworthy strategy is, \emph{$E_{ra}$}= $\alpha r \beta(t_{i})(1-p_{y_{i}}) \left(1 - \left(1 - p_{y_{i}}\right)^{n-1}\right)$. 
\end{lemma}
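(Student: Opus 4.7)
The plan is to derive $E_{ra}$ by specializing the general expected-reward formula from Lemma~\ref{lemma:expreward} to a random agent's belief structure. The key observation is that a random agent exerts low effort and obtains no evaluation, so $x_i = \varnothing$, which collapses posterior beliefs to priors.

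First, I would translate the random agent's situation into the $(q_{y_i}, q'_{y_i})$ notation. Since $a_i$ reports according to its prior distribution without an evaluation, its posterior belief about any peer $a_p$'s report equals its prior: $q'_{y_i} = Q_{p|i}(y_i \mid \varnothing) = Q_p(y_i) = q_{y_i}$. Next, since every other agent is trustworthy, their reports equal their evaluations, so the belief distribution over reports coincides with the belief distribution over evaluations, giving $q_{y_i} = p_{y_i}$. Hence for a random agent under trustworthy peers, both $q_{y_i} = q'_{y_i} = p_{y_i}$.

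Then I would plug these into the expression from Lemma~\ref{lemma:expreward} (case $q_{y_i} > 0$, which holds since beliefs are fully mixed):
\begin{equation}\nonumber
\mathbb{E}[R_i(y_i\mid x_i); k=2] = \alpha\beta(t_i)\left[\frac{q'_{y_i}}{q_{y_i}} - 1 + r(1-q'_{y_i})\frac{q'_{y_i}}{q_{y_i}}\right]\left[1-(1-q_{y_i})^{n-1}\right].
\end{equation}
Substituting $q_{y_i} = q'_{y_i} = p_{y_i}$, the factor $\tfrac{q'_{y_i}}{q_{y_i}}$ becomes $1$, so the bracketed term simplifies to $1 - 1 + r(1-p_{y_i})\cdot 1 = r(1 - p_{y_i})$. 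This yields exactly
\begin{equation}\nonumber
E_{ra} = \alpha r \beta(t_i)(1-p_{y_i})\bigl(1-(1-p_{y_i})^{n-1}\bigr),
\end{equation}
matching the stated claim.

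The proof is essentially a careful substitution, so there is no deep obstacle; the only subtle step is justifying $q'_{y_i} = q_{y_i}$ for a random agent. I would explicitly invoke the belief-transformation rule $P_{p|i}(y \mid \varnothing) = P_p(y)$ from the beliefs subsection, together with the assumption that peers are trustworthy (so their report distribution equals their evaluation distribution), to rule out any worry that a random agent might somehow have sharper posteriors than priors. Once that is articulated, the algebra is immediate.
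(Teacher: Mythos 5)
Your proposal is correct and follows essentially the same route as the paper: both reduce the random agent's posterior to its prior ($q'_{y_i} = q_{y_i} = p_{y_i}$, since $x_i = \varnothing$) and substitute into the expected-reward expression of Lemma~\ref{lemma:expreward}, collapsing the bracketed term to $r(1-p_{y_i})$. Your explicit justification of the belief-collapse step is slightly more careful than the paper's one-line remark, but the argument is the same.
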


To ensure that the agents who exert efforts are accordingly compensated, we choose the scalar constant $\alpha$ in RPTSC's reward such that \textit{it absorbs the delay caused due to efforts}. For agent $a_{i}$ who reported at time $t_{i}$, $\alpha$ should satisfy:\footnote{Assumption $\mathsf{A}_1$ is a modification of Assumption $\mathsf{A}$ to make it fit for temporal setting.}
\begin{equation}\label{eqn:R1}
    \mathsf{A}_1: \overline{Ref}_{i}(\alpha) - \alpha \beta(t_{i}) \geq c(e_{H}) - c(e_{L}).
\end{equation}

\begin{lemma}\label{lemma:IC1}
    In \reform\ with RPTSC, an agent is incentivized to exert high efforts given all the other agents choose trustworthy strategy under Assumption $\mathsf{A}_1$ \emph{(\text{Eq. }\ref{eqn:R1})}.
\end{lemma}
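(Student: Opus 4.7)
The plan is to compare the expected utility of the trustworthy strategy with that of the random strategy and show the former strictly dominates, leveraging Assumption $\mathsf{A}_1$ together with the fully mixed belief property on the prior $P(\cdot)$.

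First, I would write down the two quantities to be compared. When agent $a_i$ exerts high effort and reports truthfully, its expected utility (before observing its evaluation) is $\overline{Ref}_i(\alpha) - c(e_H)$, where $\overline{Ref}_i(\alpha)$ is given by Eq.~\ref{eqn:R}. When $a_i$ deviates to the random strategy while all others remain trustworthy, Lemma~\ref{lemma:random} supplies its expected reward $E_{ra} = \alpha r \beta(t_i)(1-p_{y_i})(1-(1-p_{y_i})^{n-1})$, so its expected utility is $E_{ra} - c(e_L)$. Incentive compatibility for high effort then reduces to showing
\begin{equation}\nonumber
\overline{Ref}_i(\alpha) - c(e_H) \; > \; E_{ra} - c(e_L),
\end{equation}
i.e., $\overline{Ref}_i(\alpha) - E_{ra} > c(e_H) - c(e_L)$.

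Next, I would invoke Assumption $\mathsf{A}_1$ (Eq.~\ref{eqn:R1}), which tells us $\overline{Ref}_i(\alpha) - \alpha\beta(t_i) \geq c(e_H) - c(e_L)$. Substituting, it therefore suffices to establish that $\alpha\beta(t_i) > E_{ra}$, or equivalently
\begin{equation}\nonumber
1 \; > \; r\,(1-p_{y_i})\bigl(1-(1-p_{y_i})^{n-1}\bigr).
\end{equation}
This is the core inequality. I would argue it by using the fully mixed assumption on the beliefs $P(\cdot)$: since $0 < p_{y_i} < 1$, both factors $(1-p_{y_i})$ and $(1-(1-p_{y_i})^{n-1})$ lie strictly in $(0,1)$, and $r = T_p(\Omega_i,\Omega_p) \in [0,1]$. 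Thus the product on the right-hand side is strictly less than $1$, which gives the strict inequality $\alpha\beta(t_i) > E_{ra}$. Chaining this with Assumption $\mathsf{A}_1$ yields $\overline{Ref}_i(\alpha) - E_{ra} > c(e_H) - c(e_L)$, completing the proof.

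The main obstacle, as I see it, is not algebraic but conceptual: I must make sure that the deviation strategy I compare against is indeed the best response among random strategies, so that ruling it out rules out all low-effort deviations. Because a random agent has no posterior and reports from its prior $P_p$, its expected reward in REFORM with RPTSC is exactly the $E_{ra}$ of Lemma~\ref{lemma:random}, irrespective of which concrete report $y_i$ it samples (the bound $1 > r(1-p_{y_i})(1-(1-p_{y_i})^{n-1})$ holds uniformly in $y_i$ under the fully mixed assumption). Once this uniformity is noted, the strict inequality together with $\mathsf{A}_1$ closes the argument, and no additional assumption beyond the standard RPTSC ones ($\mathsf{A}_1$ and $\mathsf{B}_1$) is needed.
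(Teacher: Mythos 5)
Your proposal is correct and follows essentially the same route as the paper's proof: reduce the comparison of utilities to $\alpha\beta(t_i) > E_{ra}$ via Assumption $\mathsf{A}_1$, then bound $r(1-p_{y_i})\left(1-(1-p_{y_i})^{n-1}\right)$ strictly below $1$ using the fully mixed beliefs. The only cosmetic difference is that the paper averages this bound over $y_i$ with weights $p_{y_i}$ summing to one, whereas you note the bound holds uniformly in $y_i$; both yield the same strict inequality.
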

\begin{proof} (SKETCH) We know that the expected utility of a trustworthy agent $a_{i}$ for exerting efforts, before evaluation of tasks is $\overline{R}_{i}(\alpha) - c(e_{H})$ (from Eq.~\ref{eqn:R}). Moreover, the expected utility of a random agent is $E_{ra} - c(e_{L})$ (from Lemma~\ref{lemma:random}). Now from Assumption $\mathsf{A}_1$ (Eq.~\ref{eqn:R1}), we have $\overline{Ref}_{i}(\alpha) \geq c(e_{H})-c(e_{L}) + \alpha \beta(t_{i})$. The proof then follows by showing that $E_{ra} < \alpha \beta(t_{i})$. This implies $\overline{R}_{i}(\alpha) - c(e_{H}) \geq E_{ra} - c(e_{L})$, that is, the expected utility of an agent for exerting efforts is greater than reporting randomly.
\end{proof}

From Lemma~\ref{lemma:IC1}, one can note that random agents are at a disadvantage. What is more, TERM score for agents who report randomly drops significantly, implying that they do not receive additional chances. Hence, even when an agent $a_i$ reports randomly at a time $t_i \to 0$ , it does not get a better reward than trustworthy agents.
    
\begin{lemma}\label{lemma:IC2}
    In \reform\ with RPTSC, an agent is incentivized to report truthfully when all the other agents choose trustworthy strategy. Assuming that the agents' beliefs satisfy:\footnote{Assumption $\mathsf{B}_2$ is implied from Assumption $\mathsf{B}$ in Prop.~\ref{prop:p2} for $n=2$.}
\begin{equation}\label{eqn:cond2}
    \mathsf{B}_2: \frac{p_{x_{i}}}{p_{y_{i}}} \geq \Delta_{i}
\end{equation}
\end{lemma}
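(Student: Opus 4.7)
The plan is to show that for every agent $a_i$ with evaluation $x_i$, the expected reward of reporting truthfully ($y_i = x_i$) strictly exceeds that of any untruthful $y_i \neq x_i$, when all other agents play trustworthy strategy. Specialising Lemma~\ref{lemma:expreward} to the trustworthy profile (so $q_y = p_y$ and $q'_y = p'_y$) and writing $\rho_y := p'_y/p_y$ and $F(p) := 1 - (1-p)^{n-1}$, it suffices to prove $g(x_i) > g(y_i)$ for every $y_i \neq x_i$, where
\begin{equation}\nonumber
g(y) \;:=\; \bigl[\rho_y - 1 + r\,\rho_y (1 - p'_y)\bigr]\, F(p_y).
\end{equation}

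The first step extracts a key consequence of the two hypotheses. From the self-predicting condition (Eq.~\ref{eqn:A1}), $\Delta_i \rho_{x_i} > \rho_{y_i}$; from Assumption $\mathsf{B}_2$ (Eq.~\ref{eqn:cond2}), $p_{x_i} \geq \Delta_i p_{y_i}$. Multiplying these and using the identity $\rho_y p_y = p'_y$ immediately yields $p'_{x_i} > p'_{y_i}$; and since $\Delta_i \in [0,1]$, self-predicting alone also gives $\rho_{x_i} > \rho_{y_i}$. I would then form the difference $g(x_i) - g(y_i)$ and split it as an RPTSC core plus a REFORM bonus,
\begin{equation}\nonumber
g(x_i) - g(y_i) \;=\; \bigl[(\rho_{x_i}-1)F(p_{x_i}) - (\rho_{y_i}-1)F(p_{y_i})\bigr] \;+\; r\bigl[\rho_{x_i}(1-p'_{x_i})F(p_{x_i}) - \rho_{y_i}(1-p'_{y_i})F(p_{y_i})\bigr],
\end{equation}
and combine the strict gap $\rho_{x_i} > \rho_{y_i}/\Delta_i$, the inequality $p'_{x_i} > p'_{y_i}$, and the monotonicity and concavity of $F$ on $[0,1]$ to force both brackets to be non-negative, with at least one strictly positive.

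The hard part is that $\mathsf{B}_2$ is strictly weaker than the assumption $\mathsf{B}_1$ (Eq.~\ref{eqn:A2}) required by Proposition~\ref{prop:p2} for bare RPTSC to be truthful. Consequently, under $\mathsf{B}_2$ alone the RPTSC-core bracket need \emph{not} be positive when $p_{x_i} < p_{y_i}$, because then $F(p_{x_i}) < F(p_{y_i})$ points the ``wrong way''. The proof must therefore exploit the REFORM-bonus bracket to close the gap: intuitively, the extra pairing chance that \reform\ grants the more reputed truthful report supplies exactly the slack missing from $\mathsf{B}_2$. I would proceed by a case split on the sign of $p_{x_i} - p_{y_i}$; in the easy case both factors of the RPTSC core align, while in the hard case I would bound $F(p_{y_i})/F(p_{x_i}) \leq p_{y_i}/p_{x_i} \leq 1/\Delta_i$ via the concavity of $F$ and then use the self-predicting gap on $\rho$ together with the strict positivity of $r\rho_{x_i}(1-p'_{x_i})$ to overturn the unfavourable frequency ratio, completing the strict inequality $g(x_i) > g(y_i)$.
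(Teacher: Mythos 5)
Your forward strategy differs substantially from the paper's own proof, which argues backwards: it \emph{assumes} $\mathbb{E}[R_i(x_{i}|x_{i});k=2] > \mathbb{E}[R_i(y_{i}|x_{i});k=2]$ and reduces it through a chain of implications to the trivially true statement $(p'_{x_{i}}-p_{x_{i}})>-r$. A direct argument of the kind you sketch would be preferable, and your preliminary deductions are sound: multiplying Eq.~\ref{eqn:A1} by $\mathsf{B}_2$ does yield $p'_{x_{i}}>p'_{y_{i}}$, and your concavity bound $F(p_{y_{i}})/F(p_{x_{i}})\le p_{y_{i}}/p_{x_{i}}\le 1/\Delta_{i}$ is correct. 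In fact that bound, combined with Eq.~\ref{eqn:cond1} and $\rho_{x_{i}}>1$ (which follows from self-predicting plus fully mixed beliefs), already makes the RPTSC core bracket \emph{strictly positive} in all cases: $(\rho_{x_{i}}-1)F(p_{x_{i}})\ge(\rho_{x_{i}}-1)\Delta_{i}F(p_{y_{i}})>(\rho_{y_{i}}-1)F(p_{y_{i}})$. So your diagnosis is inverted: $\mathsf{B}_2$ is not too weak for the core, and the "hard case" you worry about is handled by exactly the tool you name.

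The genuine gap is in the other bracket. You assert that monotonicity and concavity of $F$ "force both brackets to be non-negative," but the REFORM bonus $\rho_{x_{i}}(1-p'_{x_{i}})F(p_{x_{i}})-\rho_{y_{i}}(1-p'_{y_{i}})F(p_{y_{i}})$ contains the factor $(1-p'_{x_{i}})$, which by your own deduction $p'_{x_{i}}>p'_{y_{i}}$ is \emph{smaller} than $(1-p'_{y_{i}})$; nothing in your sketch controls this, and when $p'_{x_{i}}$ is near $1$ the first term nearly vanishes while the subtracted term need not. Your closing step invokes "the strict positivity of $r\rho_{x_{i}}(1-p'_{x_{i}})$" as though the subtracted peer term $r\rho_{y_{i}}(1-p'_{y_{i}})F(p_{y_{i}})$ were absent, and your fallback plan uses the bonus to rescue the core, when the actual difficulty runs the other way: a possibly negative bonus, weighted by $r$ which can be as large as $1$, must be shown not to overwhelm the positive core. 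To close this you cannot treat the two brackets separately; you need a joint comparison of the full coefficients $\rho_{y}\bigl(1+r(1-p'_{y})\bigr)-1$ at $y=x_{i}$ and $y=y_{i}$ against the ratio $F(p_{y_{i}})/F(p_{x_{i}})$, using the constraint $\sum_{y}p'_{y}=1$ (which bounds $p'_{x_{i}}+p'_{y_{i}}\le 1$ and hence controls $p'_{x_{i}}(1-p'_{x_{i}})$ against $p'_{y_{i}}(1-p'_{y_{i}})$). As written, the argument does not go through.
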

\begin{proof} (SKETCH) For the proof, we first assume that the expected reward of an agent $a_i$ reporting truthfully is greater than any other strategy, i.e., $\mathbb{E}[R_i(x_{i}|x_{i});k=2] > \mathbb{E}[R_i(y_{i}|x_{i});k=2]$. Using Eqs.~\ref{eqn:cond1} and \ref{eqn:cond2}, we get $\left(p'_{x_{i}}-p_{x_{i}}\right) > -r$. We know that $\left(p'_{x_{i}}-p_{x_{i}}\right) \geq 0$ (from the self-predicting condition) and $-r\leq 0$. These impliy that $\mathsf{B}_2$ always holds. Thus, \reform\ with RPTSC reward scheme incentivizes truthful reporting.
\end{proof}

We conclude our game-theoretic analysis by proving that \reform\ with RPTSC  is NIC.

\begin{theorem}\label{thm:IC}
    \reform\ with RPTSC is strict Nash incentive compatible under assumptions $\mathsf{A}_1$ \emph{(Eq.~\ref{eqn:R1})} and $\mathsf{B}_2$ \emph{(Eq.~\ref{eqn:cond2})}.
\end{theorem}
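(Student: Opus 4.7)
\medskip

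\noindent\textbf{Proof plan for Theorem~\ref{thm:IC}.} The plan is to reduce the claim to the three ingredient results already proved in the paper, and then glue them together. Fix an arbitrary agent $a_i$ and assume all other agents play the trustworthy profile $s^{TS}_{-i}$. The strategy space $S$ decomposes any candidate deviation $s_i = (y_i, e_i, t_i)$ along three independent axes: effort ($e_H$ vs.\ $e_L$), report ($x_i$ vs.\ $y_i \neq x_i$), and time ($t_i^*$ vs.\ $t_i > t_i^*$). I would show that deviating along any one axis strictly decreases $u_i$, and then argue that the three sources of loss cannot cancel out.

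\medskip

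\noindent First I would handle the \emph{effort} axis. If $a_i$ picks $e_L$, then $a_i$ has no meaningful evaluation and by definition must report according to its prior, i.e.\ play the random strategy. Lemma~\ref{lemma:IC1} together with Assumption~$\mathsf{A}_1$ gives $\overline{R}_i(\alpha) - c(e_H) > E_{ra} - c(e_L)$, and the proof sketch of that lemma already argues $E_{ra} < \alpha\beta(t_i)$ strictly (the $(1-p_{y_i})$ factor is $<1$ under the fully-mixed assumption). This rules out any deviation that uses $e_L$. Next I would handle the \emph{report} axis, holding $e_i = e_H$ and $t_i = t_i^*$ fixed. Lemma~\ref{lemma:IC2} with Assumption~$\mathsf{B}_2$ yields $\mathbb{E}[R_i(x_i \mid x_i); k=2] > \mathbb{E}[R_i(y_i \mid x_i); k=2]$ for every $y_i \neq x_i$; since effort costs and timing are identical, this translates directly into a strict utility inequality $u_i(s^{TS}_i, s^{TS}_{-i}) > u_i((y_i, e_H, t_i^*), s^{TS}_{-i})$.

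\medskip

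\noindent For the \emph{time} axis I would argue as follows. First, the explicit decay factor $\beta(t_i)$ multiplying $\mathit{peer\mbox{-}fac}(\cdot)$ in Eq.~(\ref{eqn:reward}) is strictly decreasing in $t_i$, so for any $t_i > t_i^*$ the one-shot peer reward is strictly smaller. Second, by Lemma~\ref{claim:claim1}, TERM is strictly decreasing in reporting time, so a late report lowers $\Omega_{i,j}$ and thus lowers the probability $r = T_p(\Omega_i,\Omega_p)$ that $a_i$ qualifies for an additional pairing chance; inspecting the expected reward expression in Lemma~\ref{lemma:expreward}, the $r(1-q'_{y_i})q'_{y_i}/q_{y_i}$ term is strictly positive under the fully-mixed assumption, so a drop in $r$ strictly reduces the expected reward. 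Both effects push in the same direction, so $t_i = t_i^*$ is the unique best time. Since the cost $c(e_i)$ does not depend on $t_i$, this yields a strict utility inequality along the time axis.

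\medskip

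\noindent Finally I would combine the three axis-wise strict inequalities to cover arbitrary joint deviations. The cleanest way is a hybrid argument: for any $s_i = (y_i, e_i, t_i) \neq s^{TS}_i$, interpolate through $s^{TS}_i \to (x_i, e_H, t_i) \to (y_i, e_H, t_i) \to (y_i, e_i, t_i)$ and apply the time, report, and effort steps respectively; at least one step is a strict deviation and the others are weak, yielding $u_i(s^{TS}_i, s^{TS}_{-i}) > u_i(s_i, s^{TS}_{-i})$. The main obstacle I anticipate is precisely this last gluing step: the report-axis argument (Lemma~\ref{lemma:IC2}) was stated assuming high effort, and the effort-axis argument (Lemma~\ref{lemma:IC1}) was stated assuming the random-strategy report, so one has to verify that when both deviate simultaneously the bound $E_{ra} < \alpha\beta(t_i)$ together with Assumption~$\mathsf{A}_1$ still dominates any misreport-plus-low-effort combination. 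This follows because $E_{ra}$ is already the \emph{maximal} reward any low-effort agent can expect (random reporting matches the prior), so mis-reporting under $e_L$ cannot exceed $E_{ra}$, and the strict inequality from Lemma~\ref{lemma:IC1} goes through uniformly.
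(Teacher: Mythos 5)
Your proposal is correct and follows essentially the same route as the paper: effort via Lemma~\ref{lemma:IC1} under $\mathsf{A}_1$, truthful reporting via Lemma~\ref{lemma:IC2} under $\mathsf{B}_2$, and early reporting via the monotone decay factor $\beta(\cdot)$ in the expected reward of Lemma~\ref{lemma:expreward}. Your explicit hybrid/interpolation argument for joint deviations (and the observation that $E_{ra} < \alpha\beta(t_i)$ uniformly bounds any low-effort report) is in fact more careful than the paper's own one-paragraph combination of the lemmas, but it is a refinement of the same decomposition rather than a different proof.
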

\begin{proof}(SKETCH)
From Lemmas~\ref{lemma:IC1} and \ref{lemma:IC2}, we see that \reform\ with RPTSC reward incentivizes high efforts and truthful reporting. Moreover, we see that the expected reward $\mathbb{E}[R_i(y_{i}|x_{i});k=2]$ (Lemma~\ref{lemma:expreward}) is proportional to decay factor $\beta(\cdot)$ which decreases with an increase in time taken for reporting. Thus, we show that exerting efforts and early truthful reporting, i.e., trustworthy strategy is a strict Nash Equilibrium in \reform\ with RPTSC.
\end{proof}

\smallskip
\noindent\textit{Discussion.} With Lemma~\ref{claim:claim3}, we show that TERM is resistant to single report strategy. Further, RPTSC reward is also resistant to this strategy~\cite[Section 4.4]{PTSC}. One can observe that the reward in Eq. \ref{rptsc-reward} is zero in the case of single report strategy. Hence, for any appropriately chosen scalar constant $\alpha$, REFORM with RPTSC reward is also resistant to single report strategy.

In the next section, we show that REFORM with RPTSC reward guarantees significantly better fairness than RPTSC and satisfies qualitative fairness.
\subsection{Achieving Fairness through REFORM with RPTSC}

Consider the following propositions using our novel fairness definition $\gamma$-fairness (Definition~\ref{defn:gamma}).

    
    \begin{proposition}\label{ptsc-fair}
    For any $a_i\in\mathcal{A}$, RPTSC is $\gamma$-fair with $\frac{1}{\gamma} = \alpha\sum_{x_{i}\in\mathcal{X}} (1-q'_{x_{i}})\left(1 - (1 - q_{x_{i}})^{n-1}\right)$
    \end{proposition}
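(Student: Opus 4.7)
The plan is to compute $M^{*}-E^{*}$ explicitly for a trustworthy agent in RPTSC and then take the expectation over the evaluation distribution, verifying that the factor of $q_{x_i}$ in the denominator cancels against the weighting in the expectation.

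First, I would specialize the expressions to a trustworthy agent, for whom $y_i = x_i$. From Proposition~\ref{prop:p1}, substituting $y_i = x_i$ gives the expected reward
\[
E^{*} \;=\; \alpha\left(\frac{q'_{x_i}}{q_{x_i}} - 1\right)\bigl(1 - (1-q_{x_i})^{n-1}\bigr)
\]
for $q_{x_i} > 0$, and from Eq.~\ref{eqn:optreward}, the optimal reward (i.e., setting the matching probability $q'_{x_i} = 1$) is
\[
M^{*} \;=\; \alpha\left(\frac{1}{q_{x_i}} - 1\right)\bigl(1 - (1-q_{x_i})^{n-1}\bigr).
\]

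Next, I would subtract these two expressions. The constant $-1$ terms cancel and the common factor $(1-(1-q_{x_i})^{n-1})$ factors out, leaving
\[
M^{*} - E^{*} \;=\; \alpha \cdot \frac{1 - q'_{x_i}}{q_{x_i}} \cdot \bigl(1 - (1-q_{x_i})^{n-1}\bigr).
\]

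Finally, I would take the expectation over $x \in \mathcal{X}$. Since the trustworthy agents report their true evaluations, the relevant distribution is the prior over evaluations, which under the trustworthy profile equals $q_{x_i}$. Multiplying by $q_{x_i}$ and summing cancels the denominator exactly, yielding
\[
\mathbb{E}_{x \in \mathcal{X}}[M^{*} - E^{*}] \;=\; \alpha \sum_{x_i \in \mathcal{X}} (1 - q'_{x_i})\bigl(1 - (1-q_{x_i})^{n-1}\bigr),
\]
which, by Definition~\ref{defn:gamma}, is precisely $1/\gamma$. The step requiring the most care is identifying the correct weighting in the expectation: one must recognize that for a trustworthy profile the evaluation distribution coincides with $q_{x_i}$, so that the $1/q_{x_i}$ in $M^{*} - E^{*}$ is absorbed cleanly; otherwise the derivation is routine algebra from the two previously established formulas.
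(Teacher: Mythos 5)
Your proposal is correct and follows essentially the same route as the paper's proof: substitute $y_i = x_i$ into the expected reward from Proposition~\ref{prop:p1}, subtract from the optimal reward in Eq.~\ref{eqn:optreward} so the $-1$ terms cancel, and take the expectation over evaluations weighted by $q_{x_i}$ so that the denominator cancels. Your explicit remark that the expectation weight coincides with $q_{x_i}$ under the trustworthy profile makes precise a step the paper performs only implicitly.
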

\begin{proof}(SKETCH) In RPTSC, the expected reward differs from optimal reward when reports do not match. That is, with probability ($1-q'_{x_{i}}$) it does not produce an optimal reward.
\end{proof}

\begin{proposition}\label{reform-fair}
For any $a_i\in\mathcal{A}$, \reform\ with RPTSC is $\gamma$-fair with $\frac{1}{\gamma} = \alpha\sum_{x_{i}\in\mathcal{X}} (1-rq'_{x_{i}})(1-q'_{x_{i}})\left(1 - (1 - q_{x_{i}})^{n-1}\right)$.
\end{proposition}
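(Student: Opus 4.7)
The plan mirrors the proof sketch of Proposition \ref{ptsc-fair}, except that I would start from Lemma \ref{lemma:expreward} rather than Proposition \ref{prop:p1}, so that the additional pairing chance introduces an extra $(1-rq'_{x_i})$ factor in the fairness gap. First, I would instantiate Lemma \ref{lemma:expreward} at a trustworthy report ($y_i=x_i$) to obtain the expected reward
\begin{equation*}
E^{*}=\alpha\left[\frac{q'_{x_i}}{q_{x_i}}-1+r(1-q'_{x_i})\frac{q'_{x_i}}{q_{x_i}}\right]\bigl(1-(1-q_{x_i})^{n-1}\bigr),
\end{equation*}
where $\beta(t_i)$ is suppressed since the same factor multiplies $M^{*}$ below and cancels in the $\gamma$-fairness definition.

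Second, I would read off $M^{*}$ by setting $q'_{x_i}=1$ in the same expression; the correction term vanishes because $1-q'_{x_i}=0$, yielding $M^{*}=\alpha\bigl(\tfrac{1}{q_{x_i}}-1\bigr)\bigl(1-(1-q_{x_i})^{n-1}\bigr)$, in agreement with Eq.~\ref{eqn:optreward}. Third, a direct subtraction and factoring give
\begin{equation*}
M^{*}-E^{*}=\alpha\bigl(1-(1-q_{x_i})^{n-1}\bigr)\cdot\frac{(1-q'_{x_i})(1-rq'_{x_i})}{q_{x_i}}.
\end{equation*}
Finally, taking the expectation over $x_i\in\mathcal{X}$ with weight $q_{x_i}$ (exactly as in the proof of Proposition \ref{ptsc-fair}) cancels the denominator and leaves
\begin{equation*}
\tfrac{1}{\gamma}=\alpha\sum_{x_i\in\mathcal{X}}(1-rq'_{x_i})(1-q'_{x_i})\bigl(1-(1-q_{x_i})^{n-1}\bigr),
\end{equation*}
which is the claim.

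The main obstacle is pure bookkeeping rather than any deep estimate. One has to verify that the optimal reward $M^{*}$ in REFORM with RPTSC coincides with that of plain RPTSC, which holds because once reports match on the first pairing the additional-chance branch of Framework~1 never triggers; and one must justify weighting the expectation over $\mathcal{X}$ by the prior $q_{x_i}$, which is consistent with the analogous step in Proposition \ref{ptsc-fair}. Comparing the two expressions, the fairness gap shrinks by a factor $(1-rq'_{x_i})\in[0,1]$ per summand, exactly the improvement one expects from the extra pairing chance and strict whenever $r>0$ and $q'_{x_i}<1$.
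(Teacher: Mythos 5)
Your proof is correct and follows essentially the same route as the paper: both start from Lemma~\ref{lemma:expreward} instantiated at $y_i=x_i$, identify $M^{*}$ with the RPTSC optimal reward of Eq.~\ref{eqn:optreward}, subtract, and take the expectation over $\mathcal{X}$ weighted by $q_{x_i}$ to cancel the denominator. The only cosmetic difference is that the paper handles the decay factor by simply setting $\beta(t)\equiv 1$ rather than claiming it ``cancels'' (since $\gamma$-fairness is defined via a difference, $\beta$ would scale rather than cancel), but this does not affect the argument.
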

\begin{proof}(SKETCH) The reward in \reform\ with RPTSC, differs from optimal reward with smaller probability, i.e., $(1-q'_{x_{i}})(1-rq'_{x_{i}})$ because of the additional chance(s) given to a trustworthy agent.
\end{proof}

From the Props.~\ref{ptsc-fair} and \ref{reform-fair}, we observe that $\gamma$ in \reform\ with RPTSC is \emph{greater than} that in RPTSC. This implies that employing \reform\ guarantees an expected reward closer to the optimal reward. Thus, \reform\ with RPTSC is fairer for trustworthy agents compared to RPTSC.

To show that \reform\ with RPTSC is qualitatively fair, we use TERM scores of the agents for reputation scores in Definition~\ref{defn:qualitative}.
\begin{theorem}\label{thm:qf}
\reform\ with RPTSC satisfies qualitative fairness. 
\end{theorem}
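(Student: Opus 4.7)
The plan is to read off the claim directly from the expected reward formula proved in Lemma~\ref{lemma:expreward}. The central observation is that, fixing the report $y$, time $t$, and the beliefs about reports, the agent's own reputation enters the expected reward \emph{only} through the scalar $r = T_{p}(\Omega,\Omega_{p}) = \Pr(\Omega \geq \Omega_{p})$, and it enters \emph{linearly}. So the task reduces to (i) verifying that the coefficient of $r$ is non-negative, and (ii) invoking monotonicity of a CDF in its argument.

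First I would substitute Lemma~\ref{lemma:expreward} and rewrite the expression as $A(x_{i}) + r \cdot B(x_{i})$ with
\[
B(x_{i}) \;=\; \alpha\,\beta(t)\,(1-q'_{y_{i}})\,\frac{q'_{y_{i}}}{q_{y_{i}}}\,\bigl[1-(1-q_{y_{i}})^{n-1}\bigr].
\]
A factor-by-factor check gives $B(x_{i}) \ge 0$: $\alpha>0$, $\beta(t)>0$, $q'_{y_{i}} \in [0,1]$, $q_{y_{i}} > 0$ (we are in the non-degenerate case of the lemma), and $1-(1-q_{y_{i}})^{n-1} \ge 0$. Hence the expected reward, conditional on an evaluation $x_{i}$, is monotonically non-decreasing in $r$.

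Next I would appeal to the assumption on the belief distribution $T$: since $r = \Pr(\Omega \ge \Omega_{p})$ is the CDF of the peer's reputation score evaluated at the agent's own score, it is monotonically non-decreasing in $\Omega$. Thus $\Omega_{i} \ge \Omega_{j}$ implies $r_{i} \ge r_{j}$. Combining with step one yields the pointwise inequality $A(x) + r_{i} B(x) \ge A(x) + r_{j} B(x)$ for every $x$.

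The only loose end is that Lemma~\ref{lemma:expreward} conditions on the evaluation $x_{i}$, while Definition~\ref{defn:qualitative} conditions only on the report $y$, time $t$, and reputation $\Omega$. I would close this gap by taking expectation over $x$ given $y_{i}=y$: since the reputation enters neither the conditional law of $x$ given $y$ nor the report-level beliefs $Q, Q'$, the averaging produces $A' + r\cdot B'$ with $B' = \mathbb{E}_{x\mid y}[B(x)] \ge 0$, preserving monotonicity. This yields $\mathbb{E}[R_{i}(y,t)\mid \Omega_{i}] \ge \mathbb{E}[R_{j}(y,t)\mid \Omega_{j}]$ whenever $\Omega_{i} \ge \Omega_{j}$, which is exactly qualitative fairness. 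I do not anticipate a serious obstacle; the only care needed is in the last step, ensuring the reputation does not covertly enter the belief terms, which follows from the symmetric/identical structure of $T$ and the fact that $P,Q$ are posed as task-level (not reputation-level) quantities.
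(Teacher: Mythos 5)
Your proposal is correct and follows essentially the same route as the paper's proof: both write the expected reward from Lemma~\ref{lemma:expreward} as $\beta(t)E'$ plus $r$ times a non-negative coefficient, and then invoke the monotonicity $T_{p}(\Omega_{1},\Omega_{p}) \leq T_{p}(\Omega_{2},\Omega_{p})$ for $\Omega_{1}\leq\Omega_{2}$ to conclude. Your explicit sign-check of the coefficient of $r$ and your handling of the conditioning on the evaluation versus the report are slightly more careful than the paper's (which simply assumes the two agents share the same beliefs), but the argument is the same.
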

\begin{proof}(SKETCH) Note that, \reform\ provides additional chances of pairing to the agents with high TERM scores. This reduces their chances of getting penalized from unfair pairing, leading to increased expected rewards. Thus, expected rewards are proportional to TERM scores, satisfying qualitative fairness.
\end{proof}

So far, we theoretically show that \reform\ improves fairness in PBMs through its solution of providing additional chance(s) to agents. We next validate our results through synthetic simulations.
\section{Experimental Evaluation}
In order to evaluate the performance of \reform\ with RPTSC reward, we simulate our crowdsourcing model. Our main objective is to empirically observe the fairness improvements of RPTSC with REFORM; thus, we neglect the decay factor used in the reward Eq.~\ref{eqn:reward}. We consider our setting with sufficient homogeneous tasks and answer space $\mathcal{X}= \{0,1,2\}$. To satisfy Assumptions $\mathsf{A}$ (Eq.~\ref{prop:p2}) and $\mathsf{A}_1$ (Eq.~\ref{eqn:R1}), we set $\alpha$ as 10 (for RPTSC) and 11 (for \reform).

We run our simulations for $200$ rounds, where each round has $n=50$ tasks and populated with $m=750$ agents. We assume that the set of agents comprise (i) 60\% of agents that follow trustworthy strategy (TAs) and report correct answers with a probability of at least 0.9; and (ii) 40\% of agents that report randomly (RAs) and that report every possible answer with equal probability. In each round, agents report their answers for a single task. We evaluate the agents against a randomly paired peer from the same task. In both the mechanisms, the sample size chosen for reward computation is equal to the number of tasks (i.e., 50). We compare rewards for TAs and RAs by averaging over all the rounds.

\begin{figure}[!t]
    \centering
    \includegraphics[width=\columnwidth]{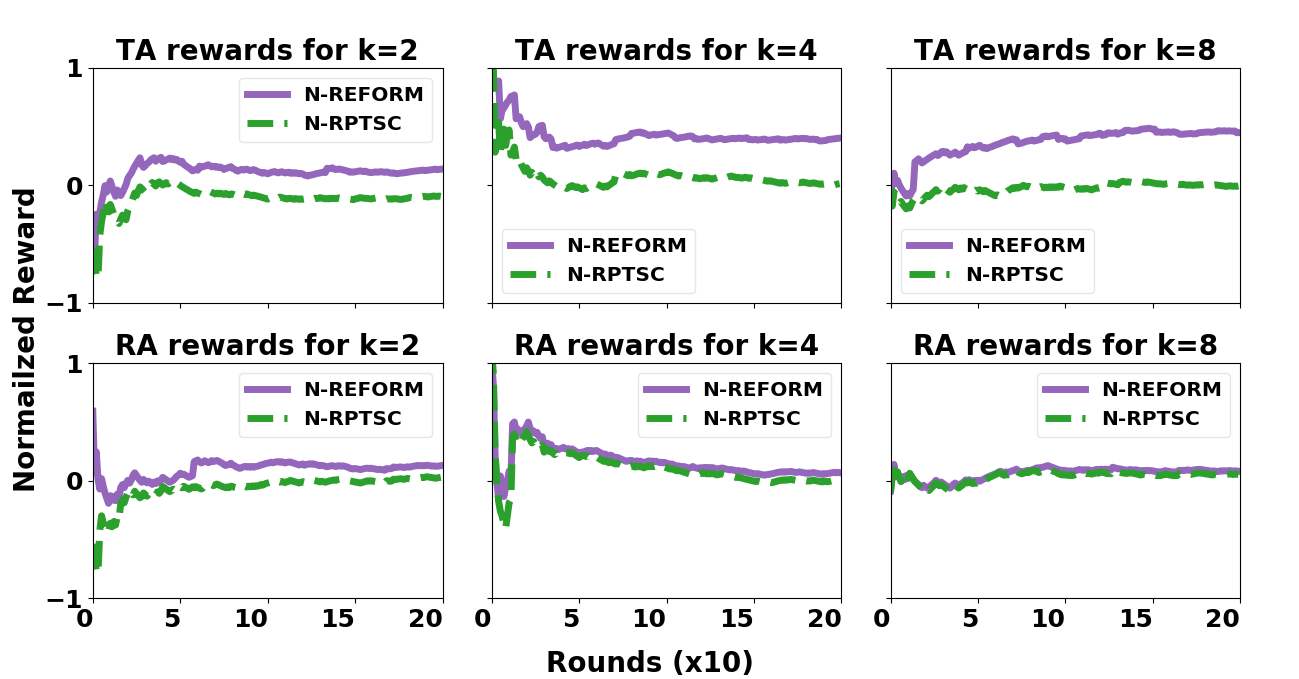}
    \caption{REFORM with RPTSC vs. RPTSC: Average rewards of agents using Trustworthy strategy (top row) and Random strategy (bottom row) as a function of round number}
    \label{fig:plot}
\end{figure}

Figure~\ref{fig:plot} shows the normalized rewards N-REFORM and N-RPTSC for TA and RA for $k=2, 4,\mbox{~and~} 8$. The rewards are defined as:
$$
\text{N-REFORM} = \frac{\text{REFORM-with-RPTSC-Reward}}{\text{Optimal-Reward} (M')},
$$
$$
\text{N-RPTSC}  = \frac{\text{RPTSC-Reward}}{\text{Optimal-Reward} (M')}.
$$

Here, N-REFORM and N-RPTSC are \textit{independent} of $\alpha$. Observe that with an increase in $k$, N-REFORM tends towards 1; that is, TA's reward in \reform\ tends to the optimal reward. Moreover, TA's reward in \reform\ is significantly higher than RPTSC, whereas the RA's rewards are \textit{almost the same}. Thus, one can note that \reform\ guarantees better rewards than RPTSC for TAs. We also observe $\gamma$ to be 0.09 and 0.05 for \reform\ and RPTSC, respectively after 200 rounds, which further highlight that \reform\ with RPTSC is fairer compared to RPSTC w.r.t. $\gamma$-fairness.

Since \reform\ produces higher rewards for TAs, one may observe that this increases the requester's budget. However, our experiments show that this increase is marginal. E.g., for $k=2$, \reform\'s budget per agent is approximately 5\% higher than RPTSC. Thus, \reform\ ensures better fairness for a moderate increase in the budget.

\section{Conclusion}

In this paper,  we focused on designing a fair reward scheme for crowdsourcing while incorporating temporal settings. Towards this, we introduced two notions of fairness, namely $\gamma$-fairness and quantitative fairness. To address the persistent issue of fairness in PBMs, we provided additional chances of pairing for trustworthy agents. To quantify an agent's trustworthiness, we introduced the reputation model, TERM,  and proved that it provides a high score for trustworthy reporting (Lemma~\ref{claim:claim1}). We proposed \reform\ a novel iterative framework that takes the reward scheme of any existing PBMs as a plug-in along with the reputation model, TERM. We proved that \reform\ with RPTSC is strict Nash incentive compatible (Theorem \ref{thm:IC}) and is resistant to single report strategy. We demonstrated that the framework \reform\ improves the fairness of RPTSC. We established that \reform\ with RPTSC achieves fairness with a marginal increase in the budget with the experiments.

\bibliographystyle{unsrt}  
\bibliography{aamas}

\newpage
\begin{appendix}
\section{Notations}
\begin{table}[h!]
\centering
\begin{tabular}{|p{3cm}|p{12cm}|}
\hline
Symbol &  Description\\
\hline\hline
$\mathcal{T}$ &  Set of tasks published in a round\\
\hline
$\mathcal{A}$ &  Set of agents in the system\\
\hline
$\mathcal{X}$ & Answer space\\
\hline
$x_{i}$ & Evaluation of agent $a_{i}$\\
\hline
$y_{i}$ & Report submitted by agent $a_{i}$\\
\hline
$t_{i}$ & Time taken by agent $a_{i}$ to report\\
\hline
$e_{i}$ & Effort an agent $a_{i}$ exerted for a task\\
\hline
$c(e_L), c(e_H)$ & Cost of low and high effort\\
\hline
$R_{i}(y_{i},t_{i})$ & Reward agent $a_{i}$ gets for reporting $y_{i}$ after time $t_{i}$\\
\hline
$P_{p}(x_{p})$ or $p_{x_p}$ & Probability that a random peer $a_{p}$ has evaluation equal to $x_{p}$ \\
\hline
$P_{p|i}(x_{p}|x_{i})$ or $p'_{x_p}$ & Probability that a random peer $a_{p}$ has evaluation $x_{p}$ when $a_{i}$'s evaluation is $x_{i}$ \\
\hline
$Q_{p}(x_{p})$ or $q_{x_p}$ & Probability that a random peer $a_{p}$ has report equal to $x_{p}$ \\
\hline
$Q_{p|i}(x_{p}|x_{i})$ or $q'_{x_p}$ & Probability that a random peer $a_{p}$ has report $x_{p}$ when  $a_{i}$'s evaluation is $x_{i}$ \\
\hline
$\Omega_{i,j}$ & TERM score of agent $a_{i}$ in the round $r_{j}$\\
\hline
$\mathcal{H}_{i,j}$ & History of the agent $a_i$ till the round $r_{j}$\\
\hline
$\phi_{i,j}$ & Round-score obtained by the agent $a_{i}$ for submitting the report in round $r_{j}$\\
\hline
$T_{p}(\Omega_{i},\Omega_{p})$ or $r$ & Probability that a random peer $a_{p}$ has TERM score less than $\Omega_i$\\
\hline
\end{tabular}
\caption{Notations}\label{tab:notations}
\end{table}

\section{Additional Proofs}
\subsection{Proof of Lemma~\ref{claim:claim3}}
\begin{proof}
            We observe that in TERM, the round-score ${\phi}_{i,j}=\frac{\mathbb{I}_{y_{i} = y_{p}}}{f(y_{i})t_{i}}$ directly depends on the report submitted $y_{i}$ and time taken $t_{i}$ by the agent $a_{i}$.
    
    Consider single report strategy where all the agents report the same answer $y_{i} = y$. In this case, $f(y_{i}) = 1$ and the expected round-score of the agent $a_{i}$ is,
    \begin{equation}
          CS: {\phi}_{i,j} = \frac{1}{t_{i}}\nonumber\\
    \end{equation}
    
    Suppose, out of $m$ agents in a round, $l$ agents report their evaluation $x$, and others report $y$.
    
    The expected round-score of agent $a_{i}$ who report $x$ in a non-colluding (trustworthy) strategy is,
    $$ TS:  {\phi}_{i,j} = \frac{l}{m}\times \frac{m}{l \times t_{i}} + \frac{m-l}{m} \times 0 = \frac{1}{t_{i}}$$
    
    We see that the expected round-score in colluding strategy, CS, is equal to a trustworthy strategy, TS. Therefore, any rational agent prefers to choose a trustworthy strategy, as it does not benefit from single report strategy. Thus, we claim that TERM is resistant to single report strategy.
\end{proof}

\subsection{Proof of Lemma~\ref{claim:claim1}}

\begin{proof}
        To prove the lemma, we show that TERM produces high scores for reporting truth early after exerting efforts, considering that all other agents are trustworthy. Agents exerting effort is assured by Lemma \ref{lemma:IC1}; this overcomes random reporting. We have seen that, TERM score obtained by agent $a_{i}$ is $\Omega_{i,j} = G({\psi}_{i,j})$ in round $r_{j}$. Trivially, TERM score increases with an increase in cumulative-score, which aggregates all the normalized round-scores. Hence, an increase in round-score increases the TERM score. 
        
        Assuming two agents with identical round-scores in previous rounds, a difference in round-score of the present round will show a difference in their TERM score. From Eq. \ref{eqn:r-s}, we calculate the round-score as ${\phi}_{i,j}=\frac{\mathbb{I}_{y_{i} = y_{p}}}{f(y_{i})t_{i}}$. Here, $f(y_{i})$ is the frequency function of $y_{i}$, calculated as the ratio of the number of reports (say, $b + 1$) that match with report $y_{i}$ to the total number of sampled reports (say, $n$). That is, $f(y_{i}) = \frac{num(y_{i})}{\sum_{y\in \mathcal{X}}num(y)} = \frac{b+1}{n}$. Further, $y_{p}$ is the random report sampled from the same task. We have seen that a trustworthy agent's strategy is ($x_{i}$, $e_{H}$, $t_i^*$), where $t_i^*$ is time taken for solving the task. Round-score of a agent who solves the task, but does not report truth (i.e., with strategy ($y_{i}$, $e_{H}$, $t_{i}$), where $t_{i} \geq t^*_{i}$) is,
    
    \begin{equation}
    \begin{split}
            &\frac{p'_{y_{i}}}{t_{i}} \sum_{b=0}^{n-1}{n-1 \choose b}{(p_{y_{i}})}^{b} (1-p_{y_{i}})^{n-b-1} \frac{n}{b+1}  \\
            & = \frac{p'_{y_{i}}}{t_{i}}  \sum_{b=0}^{n-1}{n \choose b+1}{(p_{y_{i}})}^{b} (1-p_{y_{i}})^{n-b-1}  \\
            & = \frac{p'_{y_{i}}}{p_{y_{i}}\times t_{i}} \sum_{b=1}^{n}{n \choose b}{(p_{y_{i}})}^{b} (1-p_{y_{i}})^{n-b} \nonumber \\
            & = \frac{p'_{y_{i}}}{p_{y_{i}}\times t_{i}} (1 - (1-p_{y_{i}})^{n}) \\
            & \leq \frac{p'_{x_{i}}}{p_{x_{i}}\times t_{i}} (1 - (1-p_{x_{i}})^{n}) \quad(\text{From, Eq. \ref{eqn:A1}, \ref{eqn:A2}})\\
            & \leq \frac{p'_{x_{i}}}{p_{x_{i}}\times t^*_{i}} (1 - (1-p_{x_{i}})^{n}) \quad (\text{Since, }t_{i} \geq t^*_{i})
    \end{split}
    \end{equation}
    
    From the first inequality, we observe that the round-score of an agent when it reports truth (i.e., its evaluation $x_i$) is greater. And it is evident that the round-score increases with early reporting. Hence, TERM incentivizes early as well as truthful reporting.
\end{proof}

\subsection{Proof of Lemma~\ref{lemma:expreward}}

\begin{proof}
        Observe that when $q_{y_{i}} = 0$ (i.e., the probability with which agent $a_{i}$'s peer reports $y_{i}$ is $0$), the expected reward of agent $a_{i}$ is $0$. Now, consider agent $a_{i}$ with  $q_{y_{i}} > 0$ and evaluation $x_{i}$, reports $y_{i}$ after time $t_{i}$. From Proposition \ref{prop:p1}, the expected reward of an agent $a_{i}$ for reporting $y_{i}$  in RPTSC is $E'$.
        
        
        \noindent The expected reward, $\mathbb{E}[R_i(y_{i}|x_{i});k=2]$, is calculated as follows. For this, let TERM score of the agent $a_{i}$ in round $r_{j}$ be $\Omega_{i}$. Let agent $a_{p}$ with report $y_{p}$ and TERM score $\Omega_{p}$ be the peer against whom $a_{i}$ is evaluated. As seen before, agent $a_{i}$'s belief regarding the TERM scores of the any peer is same, i.e., $\forall a_{p} \in \mathcal{A}, r = T_{p}(\Omega_i,\Omega_{p})$.
        
        From Algorithm~\ref{algo::reform-rptsc}, if the TERM score $\Omega_{i}$ of agent $a_{i}$ is less than $\Omega_{p}$, in the first chance of pairing, then agent $a_{i}$ does not get an additional chance to pair. In this case, the expected reward is the same as RPTSC expected reward times the decay factor, i.e., $\beta(t_{i}) \times E'$. 
        
        However, if $a_{i}$'s TERM score is higher than that of its peer's, then since $k=2$, it gets another chance to pair. In this case, we have: (i) if $y_{i} = y_{p}$ the reward is equal to optimal reward times with the decay factor, i.e., $\beta(t_{i})\times M'$; and (ii) if $y_{i} \not= y_{p}$ the expected reward is equal to $\beta(t_{i})\times E'$, as it receives an additional chance. Formally, we have,
        
        \begin{equation}\nonumber
            \begin{split}
                & \mathbb{E}[R_i(y_{i}|x_{i});k=2]  \\
                & = Pr(\Omega_{i} < \Omega_{p})\beta(t_{i}) E' + Pr(\Omega_{i} > \Omega_{p})\\
                &\quad \bigg(Pr(y_{i} = y_{p}|x_{i}) \beta(t_{i}) M' + Pr(y_{i} \neq y_{p}|x_{i}) \beta(t_{i}) E'\bigg)\\
                & = \beta(t_{i}) (1-r) E' + \beta(t_{i})r\big(q'_{y_{i}} M' + (1-q'_{y_{i}}) E'\big)\\
                & = \beta(t_{i}) \bigg(E' + r q'_{y_{i}}\left(M' - E'\right)\bigg)\\
                & = \alpha\beta(t_{i})\bigg(\left(\frac{q'_{y_{i}}}{q_{y_{i}}}-1\right) + r q'_{y_{i}} \frac{(1-q'_{y_{i}})}{q_{y_{i}}}\bigg)\left(1 - \left(1 - q_{y_{i}}\right)^{n-1}\right) \\
            \end{split}
        \end{equation}
        This completes the proof of the lemma.        
\end{proof}

\subsection{Proof of Corollary~\ref{cor::k}}

\begin{proof}
        Similar to the proof given for Lemma \ref{lemma:expreward}, the expected reward of an agent $a_{i}$ with evaluation $x_{i}$ and report $y_{i}$ in \reform\ with RPTSC is, 
    \begin{equation}
        \begin{split}
            & \mathbb{E}[R_i(y_{i}|x_{i});k] = (1-r)E' + r\bigg(q'_{y_{i}}M' + (1-q'_{y_{i}})\\
            & \quad \bigg((1-r)E' +\ldots r\big(q'_{y_{i}}M' + (1-q'_{y_{i}})E'\big)\bigg)\bigg) \nonumber\\
            & = rq'_{y_{i}}M'\bigg( 1 + (r-rq'_{y_{i}})+\ldots+(r-rq'_{y_{i}})^{k-1}\bigg) + E'(r-rq'_{y_{i}})^{k-1} \\
            &\quad + E'(1-r)\bigg(1 + (r-rq'_{y_{i}}) + \ldots + (r-rq'_{y_{i}})^{k-2}\bigg) \\
            & = rq'_{y_{i}}M'\sum_{i = 1}^{k} \left(r-rq'_{y_{i}}\right)^{i-1} + E'\bigg((r-rq'_{y_{i}})^{k-1} + (1-r)\sum_{i = 2}^{k} (r-rq'_{y_{i}})^{i-2}\bigg)\\
        \end{split}
    \end{equation}
    
    From the above, we see that every term is positive, and with an increase in $k$ expected reward increases. This proves the lemma.
\end{proof}

\subsection{Proof of Lemma~\ref{lemma:random}}

\begin{proof}
        Note that a random agent does not exert efforts for a task, i.e., its evaluation for the task is $\varnothing$. Therefore, $p'_{y_{i}}= p_{y_{i}}$. Now the expected reward of a random agent, when all agents are trustworthy (From Lemma \ref{lemma:expreward}), is as follows,
        \begin{equation}
        \begin{split}
            E_{ra} & = \alpha\beta(t_{i})\bigg(\left(\frac{p_{y_{i}}}{p_{y_{i}}}-1\right) + r(1-p_{y_{i}})\frac{p_{y_{i}}}{p_{y_{i}}}\bigg)\left(1-(1-p_{y_{i}})^{n-1}\right) \nonumber\\
            & = \alpha\beta(t_{i})r(1-p_{y_{i}})\left(1-(1-p_{y_{i}})^{n-1}\right)
        \end{split}
        \end{equation} 
        The equality proves the lemma.
\end{proof}

\subsection{Proof of Lemma~\ref{lemma:IC1}}

\begin{proof}
     Before evaluation of the task, agent $a_{i}$'s expected utility for investing high efforts is $\overline{Ref}_{i}(\alpha)-c(e_{H})$ and its expected utility when it reports randomly is $E_{ra}-c(e_{L})$.\\
        We show that $\overline{Ref}_{i}(\alpha) - c(e_{H}) > E_{ra}-c(e_{L})$, to prove that \reform\ incentivizes high efforts.

        \begin{equation}\label{q1}
            \begin{split}
                \overline{Ref}_{i}(\alpha) & \geq c(e_{H})-c(e_{L}) + \alpha \beta(t_{i})\quad (\text{From Eq. \ref{eqn:R1}})\\
              \implies& \overline{Ref}_{i}(\alpha) - c(e_{H})  \geq \alpha \beta(t_{i}) - c(e_{L})
            \end{split}
        \end{equation}
        
        From Lemma \ref{lemma:random}, we have the expected reward of random agent:
        \begin{equation}\label{q2}
            \begin{split}
                E_{ra} &= r\alpha\beta(t_{i})\mathbb{E}_{y_{i}\in \mathcal{X}}\big[(1-p_{y_{i}}) \left(1 - \left(1 - p_{y_{i}}\right)^{n-1}\right)\big] \\
                &= \alpha\beta(t_{i}) \sum_{y_{i}\in \mathcal{X}}\big[rp_{y_{i}}(1-p_{y_{i}}) \left(1 - \left(1 - p_{y_{i}}\right)^{n-1}\right)\big]\\
                & < \alpha\beta(t_{i})
            \end{split}
        \end{equation}
        
        The last inequality is because $\sum_{y_{i}} p_{y_{i}} =1$, and $0 < r(1-p_{y_{i}}) \left(1 - \left(1 - p_{y_{i}}\right)^{n-1}\right) < 1 \quad \forall y_{i}\in \mathcal{X}$. Then,
        \begin{equation}\nonumber
            \begin{split}
                p_{y_{i}}r (1-p_{y_{i}}) \left(1 - \left(1 - p_{y_{i}}\right)^{n-1}\right) &< p_{y_{i}} \quad \forall y_{i}\\
                \sum_{y_{i}} p_{y_{i}}r (1-p_{y_{i}}) \left(1 - \left(1 - p_{y_{i}}\right)^{n-1}\right) &< \sum_{y_{i}} p_{y_{i}}\\
                \sum_{y_{i}} p_{y_{i}}r (1-p_{y_{i}}) \left(1 - \left(1 - p_{y_{i}}\right)^{n-1}\right) &< 1
            \end{split}
        \end{equation}
        
        From the above two inequalities \ref{q1} and \ref{q2}, we have,
        \begin{equation}\nonumber
            \begin{split}
                \overline{Ref}_{i}(\alpha) - c(e_{H})  & \geq \alpha \beta(t_{i}) - c(e_{L})\\
                & > E_{ra} - c(e_{L})\\
            \end{split}
        \end{equation}
        
        The expected utility of an agent before evaluation for exerting efforts is strictly greater than the expected utility from random reporting. Thus, a random agent is incentivized to exert high efforts in \reform.   
\end{proof}

\subsection{Proof of Lemma~\ref{lemma:IC2}}

\begin{proof}
         Consider agent $a_{i}$ with evaluation $x_{i}$ and report $y_{i}$ submitted after time $t_{i}$. We assume that all other agents are trustworthy. For the strategy profile where all the agents are trustworthy, $q' = p'; q=p$. 
    
    From Lemma \ref{lemma:expreward}, the expected reward of agent $a_{i}$ for reporting $y_{i}$, in \reform\ with RPTSC when $k=2$ is $\mathbb{E}[R_i(y_{i}|x_{i});k=2]$.
    
    We prove that the expected reward of a strategic agent, $a_{i}$ is less when it reports any value other than its evaluation $x_{i}$. For this, we start with the assumption that the reward for reporting the truth is more than the reward for reporting non-truth and arrive at a noticeably obvious result.
    
    \begin{equation*}
        \begin{split}
            \implies & \mathbb{E}[R_i(x_{i}|x_{i});k=2] > \mathbb{E}[R_i(y_{i}|x_{i});k=2]\\
            \implies & \alpha\beta(t_{i})\left[\left(\frac{p'_{x_{i}}}{p_{x_{i}}}-1\right) + r(1-p'_{y_{i}})\frac{p'_{x_{i}}}{p_{x_{i}}}\right]\left(1-(1-p_{x_{i}})^{n-1}\right) \\
            & \quad > \alpha\beta(t_{i})\left[\left(\frac{p'_{y_{i}}}{p_{y_{i}}}-1\right) + r(1-p'_{y_{i}})\frac{p'_{y_{i}}}{p_{y_{i}}}\right]\left(1-(1-p_{y_{i}})^{n-1}\right) \\
            \implies &\left(\frac{p'_{x_{i}}}{p_{x_{i}}}-1\right)\left(1-(1-p_{x_{i}})^{n-1}\right) - \left(\frac{p'_{y_{i}}}{p_{y_{i}}}-1\right) \left(1-(1-p_{y_{i}})^{n-1}\right) \\
            & \quad > r\bigg[ \frac{(1-p'_{y_{i}}) p'_{y_{i}}}{p_{y_{i}}} \left(1-(1-p_{y_{i}})^{n-1}\right) -\\ 
            & \quad\frac{p'_{x_{i}}(1-p'_{x_{i}})}{p_{x_{i}}} \left(1-(1-p_{x_{i}})^{n-1}\right)\bigg]\\
            \implies &\left(\frac{p'_{x_{i}}}{p_{x_{i}}}-1\right)\bigg[\left(1-(1-p_{x_{i}})^{n-1}\right) - \frac{p_{x_{i}}}{p_{y_{i}}} \left(1-(1-p_{y_{i}})^{n-1}\right)\bigg] \\
            & \quad > r\bigg[ \frac{(1-p'_{y_{i}})p'_{y_{i}}}{p_{y_{i}}} \left(1-(1-p_{y_{i}})^{n-1}\right) - \\ 
            &\quad \frac{p'_{x_{i}}(1-p'_{x_{i}})}{p_{x_{i}}} \left(1-(1-p_{x_{i}})^{n-1}\right)\bigg]
            \quad \left(\text{From, Eq. \ref{eqn:cond1} \& \ref{eqn:cond2}}\right)\\
            \implies &\left(p'_{x_{i}}-p_{x_{i}}\right)\bigg[p_{y_{i}}\left(1-(1-p_{x_{i}})^{n-1}\right) - p_{x_{i}} \left(1-(1-p_{y_{i}})^{n-1}\right)\bigg] \\
            & \quad > r\bigg[(1-p'_{y_{i}})p'_{y_{i}}p_{x_{i}} \left(1-(1-p_{y_{i}})^{n-1}\right) -\\
            & \quad p'_{x_{i}}{p_{y_{i}}}(1-p'_{x_{i}}) \left(1-(1-p_{x_{i}})^{n-1}\right)\bigg]\\
            \implies &\left(p'_{x_{i}}-p_{x_{i}}\right)\bigg[p_{y_{i}}\left(1-(1-p_{x_{i}})^{n-1}\right) - p_{x_{i}} \left(1-(1-p_{x_{i}})^{n-1}\right)\bigg] \\
            & > r\bigg[p_{x_{i}} \left(1-(1-p_{y_{i}})^{n-1}\right) -
            p'_{x_{i}}{p_{y_{i}}}(1-p'_{x_{i}})\left(1-(1-p_{x_{i}})^{n-1}\right)\bigg]
        \end{split}
    \end{equation*}

    \begin{equation*}
        \begin{split}
             &\quad \left(\text{Since, }(1-p'_{y_{i}})p'_{y_{i}} \leq 1\right)\\
            \implies & -\frac{p_{x_{i}} \left(1-(1-p_{y_{i}})^{n-1}\right) -
            p'_{x_{i}}{p_{y_{i}}}(1-p'_{x_{i}}) \left(1-(1-p_{x_{i}})^{n-1}\right)}
            {p_{x_{i}} \left(1-(1-p_{y_{i}})^{n-1}\right) - p_{y_{i}}\left(1-(1-p_{x_{i}})^{n-1}\right)}\\
            & \quad < \frac{\left(p'_{x_{i}}-p_{x_{i}}\right)}{r} \left(\text{Since, numerator $>$ denominator}\right)\\
            \implies &\frac{\left(p'_{x_{i}}-p_{x_{i}}\right)}{r} > -1 \\
            \implies &\left(p'_{x_{i}}-p_{x_{i}}\right) > -r 
        \end{split}
    \end{equation*}
        
    We know that $p'_{x_{i}}-p_{x_{i}} > 0$ (From, self-predicting condition) and $-r \leq 0$. Thus, the above condition is true, implying the assumption made is true. That is, the expected utility for reporting the truth is strictly more than strategic reporting. Hence, proving the lemma.
\end{proof}

\subsection{Proof of Theorem~\ref{thm:IC}}
\begin{proof}
     From Lemmas \ref{lemma:IC1} and \ref{lemma:IC2}, we see that the agent $a_{i}$ under the given assumptions is strictly incentivized to exert efforts and report truthfully. We know that the expected reward of an agent $a_{i}$ for reporting the truth, i.e., $x_{i}$ is,
            $$\alpha\beta(t_{i})\bigg(\frac{p'_{x_{i}}}{p_{x_{i}}}-1 + rp'_{x_{i}}\frac{(1-p'_{x_{i}})}{p_{x_{i}}}\bigg)\left(1 - \left(1 - p_{x_{i}}\right)^{n-1}\right)$$
    One can observe that this expected reward is proportional to $\beta(t_{i})$ and $\beta(\cdot)$ function increases with decrease in time $t_{i}$. Thus, in \reform\ with RPTSC trustworthy strategy is a strict Nash equilibrium. And hence, \reform\ with RPTSC is strict Nash incentive compatible.
\end{proof}

\subsection{Proof of Proposition~\ref{prop:p1}}

\begin{proof}
         From Proposition \ref{prop:p1} and Eq. \ref{eqn:optreward}, expectation of the difference in optimal reward ($M'$) and expected reward ($E'$) for RPTSC over all possible evaluations is,
        
        \begin{equation}
            \begin{split}
                & \mathbb{E}_{x_i \in \mathcal{X}}[M'-E'] \\
                & = \mathbb{E}_{x_i \in \mathcal{X}}\bigg[\alpha \left(\frac{1}{q_{x_{i}}} - 1\right)\left(1 - \left(1 - q_{x_{i}}\right)^{n-1}\right) \\
                &\quad - \alpha\left(\frac{q'_{x_{i}}}{q_{x_{i}}} - 1\right)\left(1 - \left(1 - q_{x_{i}}\right)^{n-1}\right)\bigg] \\
                & = \alpha\mathbb{E}_{x_i \in \mathcal{X}}\left[\left(\frac{1-q'_{x_{i}}}{q_{x_{i}}}\right)\left(1 - \left(1 - q_{x_{i}}\right)^{n-1}\right)\right]\nonumber\\
                & = \alpha\sum_{x_{i}\in\mathcal{X}}q_{x_{i}}\left(\frac{1-q'_{x_{i}}}{q_{x_{i}}}\right)\left(1 - \left(1 - q_{x_{i}}\right)^{n-1}\right)\\
                & = \alpha\sum_{x_{i}\in\mathcal{X}}\left(1-q'_{x_{i}}\right)\left(1 - \left(1 - q_{x_{i}}\right)^{n-1}\right)
            \end{split}
        \end{equation}
        
        From the above, we see that RPTSC is\\ $\alpha\sum_{x_{i}\in\mathcal{X}}\left(1-q'_{x_{i}}\right)\left(1 - \left(1 - q_{x_{i}}\right)^{n-1}\right)$-fair.
\end{proof}

\subsection{Proof of Proposition~\ref{prop:p2}}

\begin{proof}
     From Lemma \ref{lemma:expreward}, we have the expected reward of \reform\ with $\forall t, \beta(t)=1$ as, $\mathbb{E}[R_i(x_{i}|x_{i});k=2]$
        
    Optimal reward in \reform\ with RPTSC is $M'$.
    \begin{equation}
        \begin{split}
            & = \mathbb{E}_{x_i \in \mathcal{X}}[M' - \mathbb{E}[R_i(x_{i}|x_{i});k=2]] \\
            & = \mathbb{E}_{x_i \in\mathcal{X}}\bigg[\alpha\left(\frac{1}{q_{x_{i}}} - 1\right)\left(1 - \left(1 - q_{x_{i}}\right)^{n-1}\right)\\ 
            &\quad - \alpha\left(\left(\frac{q'_{x_{i}}}{q_{x_{i}}}-1\right) + r(1-q'_{x_{i}})\frac{q'_{x_{i}}}{q_{x_{i}}}\right)\left(1-(1-q_{x_{i}})^{n-1}\right)\bigg] \\
            & = \alpha\mathbb{E}_{x_i \in\mathcal{X}}\left[\left(\frac{1-q'_{x_{i}}}{q_{x_{i}}} - r(1-q'_{x_{i}})\frac{q'_{x_{i}}}{q_{x_{i}}}\right)\left(1 - \left(1 - q_{x_{i}}\right)^{n-1}\right)\right]\nonumber\\
            & = \alpha\mathbb{E}_{x_i \in\mathcal{X}}\left[(1-rq'_{x_{i}})\frac{1-q'_{x_{i}}}{q_{x_{i}}}\left(1 - (1 - q_{x_{i}})^{n-1}\right)\right]\\
            & = \alpha\sum_{x_{i}\in\mathcal{X}}\left[q_{x_{i}}(1-rq'_{x_{i}})\frac{1-q'_{x_{i}}}{q_{x_{i}}}\left(1 - (1 - q_{x_{i}})^{n-1}\right)\right]\\
            & = \alpha\sum_{x_{i}\in\mathcal{X}} (1-rq'_{x_{i}})(1-q'_{x_{i}})\left(1 - (1 - q_{x_{i}})^{n-1}\right)\\
        \end{split}
    \end{equation}
    From the above, we see that \reform\ is $\alpha\sum_{x_{i}\in\mathcal{X}} (1-rq'_{x_{i}})(1-q'_{x_{i}})\left(1 - (1 - q_{x_{i}})^{n-1}\right)$-fair.
\end{proof}

\subsection{Proof of Theorem~\ref{thm:qf}}

\begin{proof}
    The expected reward Lemma \ref{lemma:expreward} of a agent $a_{i}$ for reporting $y_{i}$ is 
    \begin{equation}
        \begin{split}
            & \mathbb{E}[R_i(x_{i}|x_{i});k=2]\\
            & = \alpha\beta(t_{i})\left[\left(\frac{q'_{y_{i}}}{q_{y_{i}}}-1\right) + r(1-q'_{y_{i}})\frac{q'_{y_{i}}}{q_{y_{i}}}\right]\left(1-(1-q_{y_{i}})^{n-1}\right) \nonumber\\
            & = \beta(t_{i})E' + \alpha\beta(t_{i}) r(1-q'_{y_{i}})\frac{q'_{y_{i}}}{q_{y_{i}}}\left(1-(1-q_{y_{i}})^{n-1}\right)
        \end{split}
    \end{equation}
    Where $r$ is the probability with which the peer's TERM score is less than that of agent $a_{i}$'s score.
    
    Consider two agents $a_{1},a_{2}$ who reported same answers, with TERM scores $\Omega_{1},\Omega_{2}$ (where, $\Omega_{1}<\Omega_{2}$) respectively. The beliefs about other agents having lesser TERM scores is given by $T_{p}$, where $T_{p}(\Omega,\Omega_{p})$ is the probability with which peer $a_{p}$'s TERM score is less than $\Omega$. Since, $$\Omega_{1}<\Omega_{2} \implies T_{p}(\Omega_{1},\Omega_{p}) < T_{p}(\Omega_{2},\Omega_{p})$$
    
    Assuming that both the agents have the same beliefs. The expected reward for the agents $a_{1}$ and $a_{2}$ are given as,
    \begin{equation}
        \begin{split}
            E_{1} = \beta(t_{i})E' + & \alpha\beta(t_{i})T_{p}(\Omega_{1},\Omega_{p}) (1-q'_{y_{i}})\frac{q'_{y_{i}}}{q_{y_{i}}}\left(1-(1-q_{y_{i}})^{n-1}\right)  \\
            E_{2} = \beta(t_{i})E' + & \alpha \beta(t_{i}) T_{p}(\Omega_{2},\Omega_{p}) (1-q'_{y_{i}})\frac{q'_{y_{i}}}{q_{y_{i}}}\left(1-(1-q_{y_{i}})^{n-1}\right)  \nonumber\\
            E_{1} &< E_{2}
        \end{split}
    \end{equation}
    
    We see that the agent's expected reward with a greater reputation is higher than the agent's expected reward with a lower reputation having the same report. Therefore, \reform\ with RPTSC is qualitatively fair.
\end{proof}


\end{appendix}

\end{document}